\renewcommand{\epsilon}{\varepsilon}
\newtheorem{theorem}{Theorem}
\newtheorem{corollary}[theorem]{Corollary}
\newtheorem{lemma}[theorem]{Lemma}
\newtheorem{definition}{Definition}
\definecolor{niceblue}{rgb}{.392,.584,.929}
\definecolor{nicered}{rgb}{1,.375,.375}
\definecolor{nicegreen}{rgb}{.1,.8,.4}
\definecolor{nicepurple}{rgb}{.78,.26,.93}
\definecolor{niceorange}{rgb}{1,.575,.25}
\definecolor{niceyellow}{rgb}{.98,.94,.3}
\newcommand{\defn}{\emph}
\newcommand{\polylog}{\text{polylog}}
\newcommand{\poly}{\text{poly}}
\renewcommand{\hat}{\widehat}
\renewcommand{\tilde}{\widetilde}
\DeclareMathOperator{\E}{E}
\newif\iffull
\newif\ifanon
\begin{document}

\title{Improved Space-Efficient Approximate Nearest Neighbor Search Using Function Inversion}

\ifanon
\author{Anonymous Submission}
\else
\author{Samuel McCauley\\
Williams College\\
Williamstown, MA 01267 USA\\
\url{sam@cs.williams.edu}}
\fi

\date{}

\pagenumbering{gobble}
\maketitle
\begin{abstract}
	Approximate nearest neighbor search (ANN) data structures have widespread applications in machine learning, computational biology, and text processing.  The goal of ANN is to preprocess a set $S$ so that, given a query $q$, we can find a point $y$ whose distance from $q$ approximates the smallest distance from $q$ to any point in $S$.  
For most distance functions, the best-known ANN bounds for high-dimensional point sets are obtained using techniques based on locality-sensitive hashing (LSH).  

Unfortunately, space efficiency is a major challenge for LSH-based data structures.  Classic LSH techniques require a very large amount of space, oftentimes polynomial in $|S|$.  A long line of work has developed intricate techniques to reduce this space usage, but these techniques suffer from downsides: they must be hand tailored to each specific LSH, are often complicated, and their space reduction comes at the cost of significantly increased query times.

In this paper we explore a new way to improve the space efficiency of LSH using function inversion techniques, originally developed in (Fiat and Naor 2000).  

We begin by describing how function inversion can be used to improve LSH data structures.  This gives a fairly simple, black box method to reduce LSH space usage.  

Then, we give a data structure that leverages function inversion to improve  the query time of the best known near-linear space data structure for approximate nearest neighbor search under Euclidean distance: the ALRW data structure of (Andoni, Laarhoven, Razenshteyn, and Waingarten 2017).  
ALRW was previously shown to be optimal among ``list-of-points'' data structures for both Euclidean and Manhattan ANN;\@
thus, in addition to giving improved bounds, our results imply that list-of-points data structures are not optimal for Euclidean or Manhattan ANN\@.
\end{abstract}

\newpage
\pagenumbering{arabic}  
\setcounter{page}{1}
\section{Introduction}%
\label{sec:introduction}

Modern data science relies increasingly on sophisticated ways to query datasets.  A fundamental class of these queries is similarity search: given a query $q$, find the item in the dataset that is most ``similar'' to $q$.  Similarity search was originally introduced by Minsky and Papert in their seminal textbook~\cite{MinskyPapert69}.  Similarity search problems have applications in wide-ranging areas, including clustering~\cite{Berkhin06,HachenbergGottron13}, pattern recognition~\cite{CoverHart67}, data management~\cite{SaltonWoYa75}, and compression~\cite{KhanDoAn17}.

Unfortunately, for many types of similarity search queries, all known approaches require space or query time exponential in the dimension of the data~\cite{HarPeledInMo12}.  This is often called the curse of dimensionality.  For datasets with high---say $\Omega(\log n \log\log n)$---dimensions, this cost quickly becomes infeasible.  
Fine-grained complexity results reinforce this barrier: sublinear-time algorithms for many similarity search problems would imply that the Strong Exponential Time Hypothesis is false~\cite{AlmanWilliams15,Rubinstein18,Williams18}.

Therefore, to obtain good theoretical guarantees, a long line of research has focused on approximate similarity search~\cite{IndykMotwani98,HarPeledInMo12}.  Rather than giving the most similar point, an approximate similarity search data structure simply guarantees a point whose similarity approximates the similarity of the most similar point.

Specifically, in this work we focus on \defn{Approximate Nearest Neighbor search (ANN)}.  In ANN, the data structure preprocesses a set $S$ (a subset of a universe $U$) of $n$ $d$-dimensional points for a distance function $d(\cdot, \cdot)$, a radius $r$, and an approximation factor $c> 1$.  On a query $q\in U$, the data structure gives the following guarantee: if there exists a point $x\in S$ with $d(q, x)\leq r$, then with probability at least $.9$, the data structure returns a point $y\in S$ with $d(q, y)\leq cr$.  The goal is to obtain solutions parameterized by the approximation ratio $c$, obtaining polynomial performance in $n$ and $d$ for any constant $c>1$.

Classic results show that the definition of ANN given above can be generalized without significantly increasing the cost.
The correctness guarantee of $.9$ is arbitrary; creating independent copies of the structure can drive the success probability arbitrarily close to $1$.  Past results allow $S$ to change dynamically or to work without knowing the value of $r$ up front~\cite{IndykMotwani98}, and to obtain all near neighbors (rather than just one)~\cite{AhleAuPa17,Indyk01}.
Moreover, the approximation guarantee of ANN can be viewed as a beyond-worst-case guarantee: if the dataset is well-spaced so that there is a single point within distance $cr$ of the query, ANN guarantees that it will be returned (this was generalized further in~\cite{Datar04}).  This may explain in part why ANN solutions are effective at finding exact nearest neighbors in practice, as was seen in~\cite{Datar04,AndoniInLa15}.

\paragraph{ANN Distance Functions.}
The most widely investigated ANN problems are Euclidean and Manhattan ANN.  Euclidean and Manhattan ANN have $U = \mathbb{R}^d$; Euclidean ANN uses the standard $\ell_2$ distance function for $d(\cdot, \cdot)$, while Manhattan ANN uses $\ell_1$ for $d(\cdot, \cdot)$.

There are several motivations for studying Euclidean ANN in particular.
Euclidean distance 
is a natural and well-known metric. 
It is particularly useful for real-world similarity search problems~\cite{AndoniInLa15}.
Furthermore, Euclidean ANN can be used to solve other problems.
For example, Manhattan ANN can be solved using a Euclidean ANN data structure using a classic embedding~\cite{AndoniLaRa17,LinialLoRa95}.  In fact, this embedding gives the state-of-the-art Manhattan ANN bounds~\cite{AndoniLaRa17}.  Euclidean ANN is similarly used in the state-of-the-art method for finding the closest point under cosine similarity~\cite{AndoniInLa15}.

\paragraph{Locality-Sensitive Hashing for ANN.}
Many theoretical results for ANN are based on \defn{locality-sensitive hashing}, originally developed in~\cite{IndykMotwani98,HarPeledInMo12}.  An LSH is a hash function where points at distance $r$ hash to the same value with probability at least $p_1$, while points at distance more than $cr$ hash to the same value with probability at most $p_2$.  Such a hash immediately implies an ANN data structure with query time $\Theta(n^{\rho})$ and space $\Theta(n^{1 + \rho})$, where $\rho$ is defined as $\rho = \log p_1 / \log p_2$---see Section~\ref{sec:preliminaries} for a full exposition.

Locality-sensitive hashes have been developed for many distance functions, including Jaccard similarity~\cite{Broder97,ChristianiPagh17}, Edit Distance~\cite{McCauley21,MarcaisDePa19}, Frechet distance~\cite{DriemelSilvestri17}, and $\chi^2$ distance~\cite{GorisseCoPr11}, among many others.  

\paragraph{Space-Efficient ANN Data Structures.}
Despite its popularity and applicability, one downside of LSH stands out: an LSH-based data structure requires $\Omega(n^{1 + \rho})$ space.  This space usage quickly becomes prohibitive.  For example, for $\ell_1$, the classic LSH of Sar-Peled, Indyk, and Motwani~\cite{IndykMotwani98,HarPeledInMo12} obtains $\rho = 1/c$, so if $c=2$ the data structure requires $\Omega(n^{3/2})$ space.  

A long line of research has investigated ways to improve the space usage for LSH; much of this work focused on the Euclidean distance~\cite{Panigrahy06,Andoni09,Kapralov15,Christiani17,AndoniLaRa17,SantoyoChTe13,AumullerPaSi20,AumullerHaMa21}.  Of particular interest is the near-linear-space regime, where the space required by the data structure is close to $O(nd)$.

Space-efficient LSH approaches are difficult to design.
This is because, at a high level, space-efficient ANN methods usually store points based on a space partition much like in an LSH; the data structure saves space by storing the points in fewer locations.
The data structure then must probe more locations on each query to ensure correctness.  
Designing a correct approach along these lines---ensuring that the query probes in the correct locations to guarantee correctness---is technically challenging, and each approach must be carefully tailored to the specific space partition being used (see the discussion of techniques and related work in~\cite{Kapralov15} for example).  
Perhaps for this reason, many ANN problems have no known space-efficient solutions with theoretical query guarantees.

\paragraph{Best Known Bounds.}
Andoni et al.~\cite{AndoniLaRa17} obtain the current state-of-the-art bounds for space-efficient Euclidean ANN and Manhattan ANN\@.  Their results include a smooth tradeoff between the space usage and query time of the data structure, achieving state-of-the-art bounds along the entire curve.

Interestingly, there is a matching conditional lower bound given in~\cite{AndoniLaRa17}.  They define a type of data structure, a \defn{list-of-points data structure}, to encompass ``LSH-like'' approaches.  In short, a list of points data structure explicitly stores lists of points; each query must choose a subset of lists to look through for a point at distance $\leq cr$.  Most high-dimensional ANN data structures are list-of-points data structures, and Andoni et al. conjecture that their lower bound can be generalized to handle the few exceptions~\cite{AndoniLaRa16}.  

The data structures presented in this paper are not list-of-points data structures, as they do not store lists of points explicitly.  Instead, our data structures implicitly store lists of points while retaining good query time using a sublinear-space function inversion data structure (defined below).  This will allow us to improve their query time, breaking the list of points lower bound.

\paragraph{Function Inversion.}
Our results work by applying function inversion to LSH.  Function inversion data structures were initially proposed by Hellman~\cite{Hellman80}, and later analyzed by Fiat and Naor~\cite{FiatNaor00}.  In short, these data structures allow us to preprocess a function $f: \{1, \ldots, N\}\rightarrow \{1,\ldots, N\}$ using $o(N)$ space so that for a given $y\in \{1, \ldots, N\}$, we can find an $x$ with $f(x) = y$ in $o(N)$ time (see Lemma~\ref{lem:fiatnaorworst} for specifics).

A recent, exciting line of work has looked at how these function inversion data structures can be applied to achieve space- and time-efficient solutions to classic data structure problems.  Function inversion can be applied to an online 3SUM variant to give new time-space tradeoffs~\cite{KopelowitzPorat19,GolovnevGuHo20}.  
Aronov et al.~\cite{AronovEzSh23} gave results for the closely-related problem of collinearity testing.
Bille et al.~\cite{BilleGoLe22} use the 3SUM method as a black box to give improved string indexing methods.
Finally, Aronov et al.~\cite{AronovCaDa23} recently gave a toolbox for using function inversion for implicit set representations, with a number of applications.

We show that ANN data structures interface particularly well with function inversion.  Specifically, in Section~\ref{sec:function_inversion_for_locality_sensitive_hashing}, we take advantage of the repeated functions of LSH: since we want to invert many functions, we can store extra metadata (to be shared by all functions) to help queries.  Meanwhile, in Section~\ref{sec:faster_euclidean_linear_space_ann}, we will show how to combine the data structure of Andoni et al.~\cite{AndoniLaRa17}, with function inversion to achieve improved Euclidean and Manhattan ANN performance.

\subsection{Results}%
\label{sec:results}

Two of the most significant questions remaining in the area of space-efficient ANN are:
\begin{enumerate}[topsep=2pt,noitemsep]
	\item Is it possible to obtain a black-box method to improve the space usage of an LSH-based ANN data structure?
	\item Is it possible to improve the query time for space-efficient Euclidean ANN beyond the lower bound for list-of-points data structures?
\end{enumerate}
Our results give positive answers to these two questions.

First, we show how to use function inversion to improve the space efficiency of any locality-sensitive hashing method.  

\begin{restatable}{theorem}{basiclsh}
\label{thm:basic_lsh}
For any locality-sensitive hash family $\mathcal{L}$ 
(where $\mathcal{L}$ has $\rho = \log p_1/\log p_2$ and evaluation time $T$, and storing a given $\ell\in \mathcal{L}$ requires $O(n^{1-\rho})$ space), approximation ratio $c$, and  space-saving parameter $s < \rho$, 
there exists an ANN data structure with preprocessing time $O(n^{1 + \rho})$, expected space $\tilde{O}(n^{1 + \rho - s})$ and expected query time $\tilde{O}(T n^{\rho + 3s})$.
\end{restatable}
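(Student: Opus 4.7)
The plan is to replace the bucket hash tables of a standard LSH data structure with Fiat--Naor function inversion structures. Calibrate LSH in the textbook way: concatenate $k$ basic hash functions so that $p_2^k = 1/n$, giving expected bucket contamination $O(1)$ per repetition, and use $L = \lceil 1/p_1^k \rceil = \Theta(n^\rho)$ independent repetitions $g_1,\ldots,g_L$.

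For each $i\in[L]$, define $f_i\colon [n]\to [R]$ by $f_i(j) = g_i(p_j)$, where $p_1,\ldots,p_n$ enumerate $S$. Rather than explicitly storing the list of points in each bucket of $g_i$, preprocess $f_i$ using Lemma~\ref{lem:fiatnaorworst}, selecting parameters that yield space $\tilde{O}(n^{1-s})$ and single-preimage inversion time $\tilde{O}(n^{3s})$ (this is exactly the point on the Fiat--Naor curve $S^3 T = \tilde{O}(N^3)$ with $N=n$). Summing over the $L$ repetitions gives $\tilde{O}(n^{\rho+1-s})$ expected space for the inversion structures, plus $L\cdot n^{1-\rho}=O(n)$ for the hash descriptions and $O(nd)$ for the points themselves; the first term dominates. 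Fiat--Naor preprocessing is $\tilde{O}(n)$ per instance, so the total preprocessing, including evaluating all hashes on all points, is $O(n^{1+\rho})$.

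On a query $q$, loop over $i=1,\ldots,L$: compute $v_i=g_i(q)$ in time $T$, use Fiat--Naor to enumerate every $j$ with $f_i(j)=v_i$, and test $d(q,p_j)\leq cr$ for each such $j$. Correctness is the standard LSH argument: any $x^*\in S$ with $d(q,x^*)\leq r$ satisfies $g_i(q)=g_i(x^*)$ for at least one $i$ with constant probability, so provided the enumeration reaches every preimage of $v_i$ we examine $x^*$ (or some other candidate within $cr$) and return it. Per repetition the cost is $T$ for hash evaluation, $\tilde{O}(n^{3s})$ for the inversion, and $O(1)$ expected distance computations by the $p_2^k=1/n$ design; over $L$ repetitions this is $\tilde{O}(L(T+n^{3s}))=\tilde{O}(Tn^{\rho+3s})$.

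The main obstacle is that Fiat--Naor natively returns a single preimage while a bucket may contain several candidates, any of which could be the near neighbor. I plan to handle this by iterating: after discovering a preimage, record it in a small per-query side list and rerun Fiat--Naor with additional randomness (or, equivalently, use a small number of independently seeded copies) until no new preimage appears. Because $p_2^k=1/n$ forces the expected bucket size to $O(1)$, the expected number of iterations per repetition is $O(1)$, so the per-repetition Fiat--Naor cost stays at $\tilde{O}(n^{3s})$ in expectation. The ``shared metadata'' idea flagged in the introduction is the natural supporting device: preprocessing information about the point identities that is common to every $g_i$ is stored once rather than $L$ times, which is what keeps the per-function contribution down to the stated $\tilde{O}(n^{1-s})$.
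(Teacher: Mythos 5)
Your high-level strategy matches the paper's: concatenate to $p_2^k = 1/n$, take $L = \Theta(n^\rho)$ repetitions, define $f_i(j) = g_i(p_j)$ with domain $[n]$, and replace each bucket table with a Fiat--Naor inversion structure tuned to the $S^3T$ tradeoff point with space $\tilde{O}(n^{1-s})$. The bookkeeping of space, preprocessing, and query time is also essentially right.

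The gap is in the step you flag as ``the main obstacle,'' and the fix you propose does not work. Fiat--Naor has no query-time randomness: once the structure is built, a query for $y$ deterministically returns a fixed element of $f^{-1}(y)$ (or fails). So ``rerun with additional randomness'' does nothing. Your fallback, ``a small number of independently seeded copies,'' also does not close the gap, because Lemma~\ref{lem:fiatnaorworst} gives no guarantee that the returned preimage is uniform (or even varies) over the preprocessing randomness; the paper explicitly notes this in the proof of Theorem~\ref{thm:fiatnaor_all}. Two independently built copies could systematically favor the same element of $f^{-1}(y)$, so with any fixed number of copies you may permanently miss the true near neighbor. Even if the returned preimage were uniform, a bucket of size $\kappa$ requires $\Theta(\kappa \log \kappa)$ draws by coupon collector, and $\kappa$ is not bounded per bucket (only its expectation is $O(1)$), so ``$O(1)$ iterations per repetition'' would need its own argument.

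The idea the paper uses instead is to \emph{subsample the domain} rather than reroll the structure: for a guess $\kappa$ of the bucket size, draw $\Theta(\kappa \log n)$ random subsets $\mathcal{N}^k \subseteq [n]$, each including every index independently with probability $1/\kappa$, store each $\mathcal{N}^k$ explicitly (this is the shared $\tilde{O}(n)$ metadata), define $f_{i,k}$ as $f_i$ restricted to $\mathcal{N}^k$ with codomain compressed to $[\,|\mathcal{N}^k|\,]$ by a universal hash, and build Fiat--Naor on each $f_{i,k}$. With constant probability a given $x \in f_i^{-1}(y)$ is the \emph{unique} preimage surviving in $\mathcal{N}^k$, and then Fiat--Naor has no choice but to return it. Over $\Theta(\kappa \log n)$ subsets, every $x$ in the bucket is a singleton $\Theta(\log n)$ times and is recovered w.h.p.; repeated doubling of $\kappa$ handles not knowing $|f_i^{-1}(y)|$ in advance (Lemma~\ref{lem:singleton_general} and Theorem~\ref{thm:fiatnaor_all}). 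Finally, note that Lemma~\ref{lem:fiatnaorworst} requires equal domain and codomain $[N]$; your $f_i : [n] \to [R]$ needs the universal-hash compression of the codomain as part of the construction, not just as a side remark.
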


This black box method is the first space-efficient method for many ANN problems.  Even for the well-studied Euclidean ANN problem, our simple black-box method can be combined with the classic LSH of Andoni and Indyk~\cite{AndoniIndyk06} to give bounds that are competitive with, or even improve upon, previous algorithms (see Table~\ref{tab:eucANNhistory}).  

Second, we show how to use function inversion to obtain a data structure that gives improved state-of-the-art query times for near-linear-space Euclidean ANN and Manhattan ANN.

\begin{restatable}{theorem}{runningtime}%
\label{thm:running_time}
There exists a Euclidean ANN data structure requiring $n^{1 + o(1)}$ space and at most $n^{1.013 + o(1)}$ preprocessing time that can answer queries in $n^{\alpha(c) + o(1)}$ expected time with 
\[
	\alpha(c) = \frac{2c^2 - 1}{c^4}\left(1 - \frac{(c^2 - 1)^2}{4c^4 + (c^2 - 1)^2}\right).
\]
\end{restatable}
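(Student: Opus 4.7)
The plan is to run the ALRW data structure at an interior point of its space/time trade-off and then compress the resulting superlinear storage back down to near-linear space using Fiat--Naor function inversion. Recall that ALRW give the trade-off curve
\[
c^2\sqrt{\rho_q}+(c^2-1)\sqrt{\rho_u}=\sqrt{2c^2-1},
\]
between a ``space exponent'' $\rho_u$ (so space is $\tilde{O}(n^{1+\rho_u})$) and a ``query exponent'' $\rho_q$ (so query time is $\tilde{O}(n^{\rho_q})$). Setting $\rho_u=0$ reproduces the list-of-points optimum $\rho_q=(2c^2-1)/c^4$; any $\rho_u>0$ gives a strictly smaller $\rho_q$. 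Since our final data structure is not a list-of-points data structure (it stores points implicitly via inversion), beating the list-of-points bound is not prohibited.

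First I would instantiate ALRW at a value $\rho_u>0$ to be chosen later. The $n^{1+\rho_u}$ space overhead comes from the replication of each point across roughly $n^{\rho_u}$ hash buckets. Instead of storing those replicas explicitly, I would preprocess the hash functions used in the construction with Fiat--Naor (Lemma~\ref{lem:fiatnaorworst}), in the same spirit as Theorem~\ref{thm:basic_lsh}: on a query that falls into bucket $y$, the algorithm inverts the hash function to enumerate the points mapped to $y$, rather than reading them from a stored list. Total storage is then the $O(n)$ needed for the point set plus the $n^{1+o(1)}$ of Fiat--Naor metadata that can be amortized across hash functions within the same level of the ALRW partition tree.

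The price of inversion is an additive overhead $\beta\rho_u$ on the query exponent, where $\beta$ is the constant delivered by the Fiat--Naor time/space trade-off once one accounts for the number of probes per query and the way metadata is shared across the ALRW tree. It remains to optimize: minimize $\rho_q+\beta\rho_u$ subject to the ALRW curve. Lagrange multipliers give $\sqrt{\rho_q}/\sqrt{\rho_u}=\beta c^2/(c^2-1)$ at the optimum; substituting back produces
\[
\rho_q+\beta\rho_u \;=\; \frac{\beta(2c^2-1)}{\beta c^4+(c^2-1)^2} \;=\; \frac{2c^2-1}{c^4}\left(1-\frac{(c^2-1)^2}{\beta c^4+(c^2-1)^2}\right),
\]
which matches the stated $\alpha(c)$ exactly when $\beta=4$. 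The optimal $\rho_u$ is uniformly bounded by roughly $0.013$ over all $c>1$, yielding the claimed preprocessing bound, since preprocessing is dominated by building the ALRW tables and the Fiat--Naor metadata at size $n^{1+\rho_u}$.

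The main obstacle is two-fold. First, one has to verify that the integration of Fiat--Naor with ALRW's data-dependent hashing really yields $\beta=4$ rather than a larger constant; this requires looking inside the ALRW partition tree and carefully charging the number of hash evaluations, the amount of inversion work per evaluation, and how inversion metadata can be shared across subtrees (naively reapplying the black-box bound of Theorem~\ref{thm:basic_lsh} would give a worse constant and a strictly weaker result). Second, one must confirm that the probabilistic analyses of ALRW (over random Gaussian rotations and the random partition tree) and of Fiat--Naor (over its internal randomness) compose so that the ANN correctness guarantee and the claimed \emph{expected} query time both hold simultaneously. Given the polylogarithmic depth of the ALRW tree and the standard behaviour of Fiat--Naor, I expect both checks to go through with careful but essentially routine bookkeeping.
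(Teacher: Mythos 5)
Your high-level plan — instantiate ALRW at a strictly positive $\rho_u$, compress its superlinear storage with Fiat--Naor, and optimize $\rho_q+\beta\rho_u$ along the ALRW trade-off curve with $\beta$ the inversion overhead — is exactly the approach the paper takes, and your Lagrangian calculation reproduces the closed form $\alpha(c)$ correctly once $\beta=4$ is plugged in. But as written there is a real gap, and it is not ``routine bookkeeping.'' You assume the excess space of ALRW lives entirely in the replicated lists of points, so that dropping the lists and replacing them with Fiat--Naor metadata already gets you to $n^{1+o(1)}$ space. That is false: the ALRW \emph{tree itself} (the internal nodes storing the data-dependent sphere/ball decompositions) already has $n^{1+\rho_u+o(1)}$ nodes, independent of the lists. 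Function inversion cannot compress the tree, because the tree is what you need in order to evaluate the hash functions in the first place. The paper deals with this by constructing a \emph{truncated tree}: it caps the number of sphere nodes on any root-to-leaf path at $\overline{K}=K/(1+\rho_u)$ and then recursively removes leaves with fewer than $n^{\rho_u/(1+\rho_u)}$ points, and proves (Lemma~\ref{lem:truncated_tree_space}) that what remains has $n^{1+o(1)}$ nodes while every leaf still has only $n^{\rho_u/(1+\rho_u)+o(1)}$ expected points. Without this step you cannot even store the data structure.

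The second missing piece is the mechanism by which function inversion applies to a tree rather than to an LSH. ALRW is not a family of hash functions $\ell_i$; a point traverses a variable-branching subtree, so there is no single ``bucket'' per hash to invert. The paper resolves this by defining, for each possible \emph{tree route} $I$ (a vector of length $\overline{K}+K_b$ recording which child to take at each level), a function $\tau_I:[n]\to(\text{leaf labels})$ that follows $x_j$'s traversal according to $I$. Lemma~\ref{lem:branching_concentration} shows the branching is concentrated, so $R=n^{\rho_u/(1+\rho_u)+o(1)}$ tree routes suffice (Lemma~\ref{lem:number_functions}). Running the all-function-inversion structure on the $R$ functions $\tau_I$ with parameter $\sigma=R$ gives an $R^3$ factor per inversion; multiplying by the $R$ routes queried per leaf gives the $R^4|L|$ term in Lemma~\ref{lem:performance}, which is precisely where $\beta=4$ comes from. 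You flagged verifying $\beta=4$ as the main obstacle, which is the right instinct, but you reverse-engineered it from the target formula rather than deriving it — and the derivation depends on exactly the truncated-tree and tree-route constructions that your proposal elides.
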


In short, the idea of our data structure is to take the data structure of Andoni et al.~\cite{AndoniLaRa17} (which we refer to as ALRW), and replace the lists of points with a function-inversion data structure.  Unfortunately, ALRW itself---even without the lists---requires far too much space.  Thus, achieving our bounds requires  trimming the ALRW data structure and carefully running function inversion on the result.

\subsection{Comparing Results}%
\label{sec:comparing_results}

\begin{wraptable}{r}{.4\textwidth}
	\renewcommand*{\arraystretch}{1.5}
	\begin{tabular}{|c|c|}
	\hline 
	Data Structure  & Query Time Exponent \\
	\hline 
	\hline 
	\cite{Panigrahy06} & $2.06/c$ \\
	\hline 
	\cite{Andoni09,AndoniIndyk06} & $O(1/c^2)$ \\
	\hline 
	Thm~\ref{thm:basic_lsh} w/~\cite{AndoniIndyk06} & $4/c^2$ \\
	\hline 
	\cite{Kapralov15} & $4/(c^2 + 1)$ \\
	\hline 
	\cite{Christiani17} & $4c^2/(c^2 + 1)^2$ \\
	\hline 
	ALRW~\cite{AndoniLaRa17} & $(2c^2 - 1)/c^4$ \\
	\hline 
	Thm~\ref{thm:running_time} & $\frac{2c^2 - 1}{c^4}\left(1 - \frac{(c^2 - 1)^2}{4c^4 + (c^2 - 1)^2}\right)$ \\
	\hline 
\end{tabular}
\caption{This table gives a summary of known results for Euclidean ANN data structures using $n^{1 + o(1)}$ space. A data structure with listed query time exponent $q$ indicates a query time of at most $n^{q}$; lower-order terms are not listed.}%
\label{tab:eucANNhistory}
\end{wraptable}

In this section we compare our results to past work, focusing specifically on Euclidean ANN.  In Table~\ref{tab:eucANNhistory} we give a brief summary of the query time achieved by past near-linear (i.e.\ $n^{1 + o(1)}$ space) data structures.

We focus specifically on giving intuition on how our query time improves on the state of the art.
The previous state of the art query time by Andoni et al.~\cite{AndoniLaRa17}; we refer to it as $n^{ALRW(c) + o(1)}$, with $ALRW(c) = (2c^2 - 1)/c^4$.  In Theorem~\ref{thm:running_time} we achieve query time $n^{\alpha(c) + o(1)}$ with $\alpha(c) = \frac{2c^2 - 1}{c^4}\left(1 - \frac{(c^2 - 1)^2}{4c^4 + (c^2 - 1)^2}\right)$.

Since $(c-1)^2$ and $4c^2$ are positive, we immediately have $\alpha(c) < ALRW(c)$.  Furthermore, as $c$ gets large, $(c^2 - 1)^2/(4c^4 + (c^2 - 1)^2)$ approaches  $1/5$; more concretely, 
one can verify that if $c > 2.6$ we have $\alpha(c) < .85ALRW(c)$.

\begin{figure}[t]
	\begin{floatrow}
		\ffigbox{%
	\begin{tikzpicture}
		\node[inner sep=0pt] (plot) at (0,0) {\includegraphics[width=.8\columnwidth]{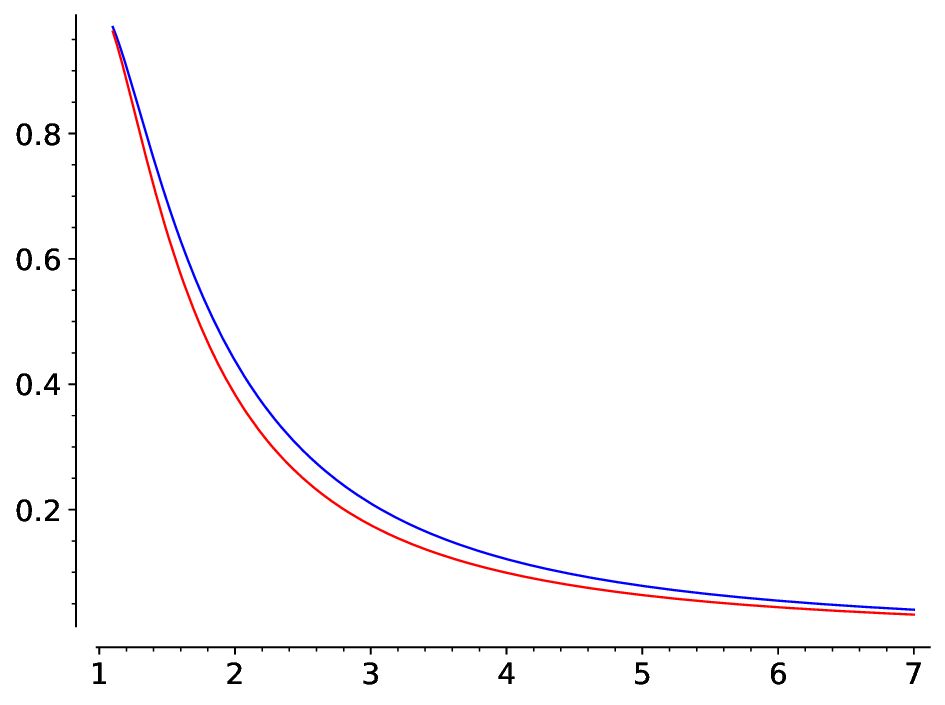}};
		\node[above=-.2cm of plot] {\textbf{Comparing Theorem~\ref{thm:running_time} and $ALRW(c)$}};
		\node[below right=-.01cm and -3cm of plot,anchor=north] {Approximation factor $c$};
		\node[left=-.01cm of plot,rotate=90,anchor=south] {Query time exponent};
		\node[above right=-1.5cm and -2cm of plot,draw,rectangle,align=left] {\footnotesize \tikz[baseline=-2pt]{\draw[color=red] (0,0) to (.3,0);} $\alpha(c)$ \\ \footnotesize \tikz[baseline=-2pt]{\draw[color=blue] (0,0) to (.3,0);} $ALRW(c)$};
	\end{tikzpicture}
		}{%
	\caption{A figure comparing $\alpha(c)$ from Theorem~\ref{thm:running_time} to $ALRW(c)$.  The $y$-axis represents the exponent of the query time: our results obtain a linear-space data structure with query time $n^{\alpha(c) + o(1)}$, compared to the state of the art in~\cite{AndoniLaRa17} with query time $n^{ALRW(c) + o(1)}$.}%
\label{fig:comparison}
}%
\hspace{.3in}
\capbtabbox{%
\begin{tabular}{ c|c c  c}
	$c$  & $\alpha(c)$  & $ALRW(c)$ & preproc.\\
	\hline 
	1.05 & .989 & .991 & {1.001}\\
	1.5 & .641 & .691 &  {1.011}\\
	1.79 & .471 & .527 & {1.012}\\
	2 & .383 & .438 &    {1.012}\\
	3 & .175 & .210 &    {1.007} \\
	10 & .016 & {.020} & {1.001}\\
\end{tabular}
\vspace{1.5cm}
}{%
	\caption{A table comparing the exponent of the query time of our linear-space approach vs that of~\cite{AndoniLaRa17}.  All values are rounded to the third decimal place.  In the final column, we give the exponent of the preprocessing time\iffull (e.g. $n^{1.011 + o(1)}$ time for $c=1.5$)\fi.}%
\label{tab:comparison}%
}%
\end{floatrow}
\vspace{-.1in}
\end{figure}

For a more complete picture, we compare the bounds obtained by each in Figure~\ref{fig:comparison} and Table~\ref{tab:comparison}.  Specifically, this gives a sense of how much improvement is obtained for various values of $c$.  
We see that for values of $c$ close to $2$ we obtain a query time improvement of roughly $n^{.04}$. 
In Table~\ref{tab:comparison}, we also give the exact exponent of the preprocessing time; for large or small $c$ this is noticeably better than the $n^{1.013}$ upper bound from Theorem~\ref{thm:running_time}.  
While our improved query time comes at a cost of increased preprocessing time (ALRW requires $n^{1 + o(1)}$ preprocessing time), this increase is fairly mild.
We include an entry at $c = 1.79$ as it maximizes $ALRW(c) - \alpha(c)$.

\subsection{Related Work}%
\label{sec:related_work}

\paragraph{ANN.}
We briefly describe past work on ANN.  See the survey by Andoni and Indyk~\cite{AndoniIndyk17} for a more thorough exposition.

As mentioned above, theoretical bounds for ANN without exponential dependance on $d$ usually rely on locality-sensitive hashing.  LSH-based methods (including data-dependent results) have seen extensive work on Euclidean ANN in particular~\cite{HarPeledInMo12,Datar04,AndoniIndyk06,AndoniInNg14,AndoniRazenshteyn15,AndoniLaRa17,Ahle17}.  Further work has applied LSH to many other metrics~\cite{ChristianiPagh17,McCauley21,MarcaisDePa19,DriemelSilvestri17,GorisseCoPr11,AhlePaRa16}; this includes the well-known (and widely used) Bit Sampling LSH (for Hamming distance)~\cite{HarPeledInMo12} and MinHash (for Jaccard similarity)~\cite{Broder97}.

In terms of practical performance, LSH-based methods are competitive, though they can be outperformed by heuristic methods on structured data (for a full comparison see e.g. the benchmark of Aum\"{u}ller et al.~\cite{AumullerBeFa20}).  
There has been work on giving worst-case bounds for the performance of these heuristics on certain types of datasets~\cite{MccauleyMiPa18,Laarhoven18,AndoniInLa15,AhleAuPa17}.  

\paragraph{Space-Efficient ANN.}
Theoretical bounds in the low-space regime have focused largely on Euclidean ANN.

Panigrahy gave a data structure for Euclidean ANN using $\tilde{O}(n)$ space, and achieving query time at most $\tilde{O}(n^{2.09/c})$ (this bound can be improved for small $c$; e.g.\ if $c=2$ the query time is roughly $n^{.69}$)~\cite{Panigrahy06}.

Andoni and Indyk gave $\tilde{O}(n)$-space data structure with an improved query time of $n^{O(1/c^2)}$, where the constant in the exponent of $n$ is not specified~\cite{AndoniIndyk06}; this is discussed further in Andoni's thesis~\cite{Andoni09}.

Kapralov gave the first data structure that can trade off smoothly between space and time~\cite{Kapralov15}.  Setting the parameters to $O(dn)$ space\iffull{} (linear in the size of the input data)\fi, this data structure requires $n^{4/(c^2 + 1) + o(1)}$ query time.  

A sequence of followup papers~\cite{Laarhoven15,Christiani17,AndoniLaRa16b} (see also~\cite{AumullerPaSi20}) 
culminated in Andoni et al.~\cite{AndoniLaRa17} giving a data structure achieving smooth tradeoffs between time and space for Euclidean and Manhattan distances, along with a matching lower bound for ``list-of-points'' data structures.  Setting the space to be $n^{1 + o(1)}$, they achieve query time $n^{(2c^2 - 1)/c^4 + o(1)}$.  

\paragraph{Function Inversion.}

The problem of giving smooth time-space tradeoffs for inverting a black-box function was initiated by Hellman~\cite{Hellman80}.  This idea was generalized and improved by Fiat and Naor~\cite{FiatNaor00}, and recently improved further by Golovnev et al.~\cite{GolovnevGuPe23}.
There are known lower bounds as well: Hellman's original result is nearly optimal for inverting random functions when limited to a restricted class of algorithms~\cite{BarkanBiSh06}, and there are tight upper and lower bounds for inverting permutations~\cite{Yao90}.

As mentioned earlier, function inversion has recently been applied to data structures, achieving improved tradeoffs for 3SUM~\cite{KopelowitzPorat19,GolovnevGuHo20}, as well as the related problems of collinearity testing~\cite{AronovEzSh23} and string indexing~\cite{BilleGoLe22}.  The results in this paper extend this line of work, applying function inversion to ANN\@.

\section{Preliminaries}%
\label{sec:preliminaries}

\paragraph{Locality-Sensitive Hashing.}

We begin by formally defining a locality-sensitive hash family.
\begin{definition}
	A hash family $\mathcal{H}$ is $(p_1, p_2, r, cr)$-sensitive for a distance function $d(\cdot, \cdot)$ if for any points $x$, $y$ with $d(x,y)\leq r$, $\Pr_{h\in\mathcal{H}}[h(x) = h(y)] \geq p_1$, and for any points $x'$, $y'$ with $d(x', y') \geq cr$, $\Pr_{h\in\mathcal{L}}[h(x') = h(y')] \leq p_2$.
\end{definition}

We describe how a locality-sensitive hash can be used for ANN search as was originally described by Indyk and Motwani~\cite{IndykMotwani98,HarPeledInMo12}.  

To begin, we select a sequence of hash functions.
The first step is to create a new LSH by concatenating $\lceil \log_{1/p_2} n\rceil$ independently-chosen functions from $\mathcal{H}$; let $\mathcal{L}$ be the family of possible hash functions resulting from this concatenation.  Thus, if $d(x',y') \geq cr$, $\Pr_{\ell\in \mathcal{L}}[\ell(x') = \ell(y')] \leq 1/n$.   These parameters are set so that there is one expected point $z$ with $d(z, q) \geq cr$ and $\ell(z) = \ell(q)$.  

ANN performance is generally given in terms of $\rho = \log p_1/\log p_2$.  
We have for $x$, $y$ with $d(x,y)\leq r$, $\Pr_{\ell}[\ell(x) = \ell(y)] \geq p_1/n^{\rho}$.
Thus, we repeat the above steps $R = \Theta(n^{\rho}/p_1)$ times to obtain $R$ hash functions $\ell_1, \ldots, \ell_R$, each sampled independently from $\mathcal{L}$.  

Now, preprocessing.
To preprocess, 
create a \defn{reverse lookup table} for $\ell_i$ for all $i \in \{1,\ldots,R\}$.
A reverse lookup table is a key-value store (for example, we can implement it using a hash table).  For each $y$ such that there exists an $x$ with $\ell_i(x) = y$, the reverse lookup table stores $y$ as the key, and $\{x ~|~ \ell(x) = y\}$ as the value.  Thus, each reverse lookup table takes $\Theta(n)$ space (giving $\Theta(n^{1+\rho}/p_1)$ space in total), and using these tables we can find $\ell^{-1}_j(q)$ for a given $q$ and $i$ in $O(1 + |\ell^{-1}_i(q)|)$ expected time.

To perform a query $q$, we look up $\ell^{-1}_i(q)$ for all $i\in \{1,\ldots, R\}$, using the reverse lookup table for each hash function.  We call all points found this way \defn{candidate points}; there are $O(n^{\rho}/p_1)$ candidate points with distance more than $cr$ from $q$ in expectation.  (If we find a point at distance less than $cr$ we return it, so such points only increase the number of candidate points by an additive $O(1)$.)  

If $x$ has $d(x,q) \leq r$, then the query fails only if $x$ is not in the candidate points.  This occurs with probability $(1 - p_1/n^{\rho})^{R}$; this is at most $.1$ if the constant when setting $R = \Theta(n^{\rho}/p_1)$ is sufficiently large.

The space usage of the data structure is $O(nR) = O(n^{1+\rho}/p_1)$ for the reverse lookup tables, plus the space to store $\ell_1, \ldots, \ell_R$.  (Usually a particular function $\ell_i$ can be stored in $n^{o(1)}$ space, so this is a lower-order term.) A query takes $O(1)$ expected time for each reverse hash table lookup if $\ell_i$ can be evaluated in constant time, for $O(R) = O(n^{\rho}/p_1)$ time overall; the query time increases linearly if an $\ell_i$ sampled from $\mathcal{L}$ is slower to evaluate in expectation.

\paragraph{Data-dependent ANN Solutions.}

Locality-sensitive hashing as defined above is \defn{data-independent}: we choose the hashes independent of $S$.  

Interestingly, it is possible to obtain improved ANN bounds using \defn{data-dependent} techniques~\cite{AndoniInNg14}: the point set is partitioned into lists much like LSH, but these partitions are chosen using, in part, properties of the data itself.  
Thus, generating the data structure requires making random choices that depend not only on $n$, $r$, and $c$, but also properties of the point set $S$ itself.  
The state of the art ANN bounds for Euclidean space use data-dependent techniques~\cite{AndoniLaRa17}.

\paragraph{Definitions.}

Throughout the paper, we assume the dataset $S$ is in some arbitrary order; i.e.\ $S = x_1, x_2, \ldots, x_n$.  We assume without loss of generality that $n$ is a power of $2$.

We use $\tilde{O}(f(n))$ to represent $O(f(n)\polylog n)$.  Many of our bounds have an $n^{o(1)}$ term; since $\polylog n = n^{o(1)}$, we generally drop $O$ and $\tilde{O}$ notation when this term is present.  We say that a data structure is \defn{near-linear-space} if it requires space $n^{1 + o(1)}$ for a dataset of size
$n$.

We use $\circ$ to denote concatenation, and $[N]$ to represent $\{1,\ldots, N\}$.  For any list $A$, we use $A[i]$ to denote the $i$th element of $A$ ($0$-indexed).

For any function $f: X \rightarrow Y$, we call $X$ the \defn{domain} and $Y$ the \defn{codomain} of $f$.  We define $f^{-1}(y) = \{x \in X ~|~ f(x) = y\}$.  

We say that an event occurs \defn{with high probability} if, for any $C$, it occurs with probability bounded below by $1 - 1/n^C$.  (Generally, the event is parameterized by a constant $C_2$, usually hidden in $O$ notation; adjusting $C_2$ according to the desired $C$ can achieve the probability bound---see Corollary~\ref{cor:chernoff_log} for example.)

If an event occurs with high probability then we assume it occurs.  (If 
any with high probability event does not occur, we revert to a trivial data structure that scans $S$ for each query; this increases the expected query cost by $o(1)$.)

\paragraph{Formulas.}
We use the following well-known formulas.  First, $\binom{x}{y} \leq (ex/y)^y$.  Furthermore, $(1 + 1/n)^n \leq e$, and $(1 - 1/n)^n \leq 1/e$.

We use Chernoff bounds throughout the proofs; we reiterate them here for completeness.

\begin{lemma}[\cite{Chernoff52,MitzenmacherUpfal17}]%
\label{lem:chernoff}
Let $X = X_1 + X_2 + \ldots$ be the sum of identical independent $0/1$ random variables.  Then for any $\delta \geq 0$, $ \Pr[X \geq (1 + \delta)] \leq e^{-\delta^2 \E[X]/(2 + \delta)} $, and for any $1 \geq \delta \geq 0$, $\Pr[X \leq (1 - \delta)] \leq e^{-\delta^2 \E[X]/2}$.
\end{lemma}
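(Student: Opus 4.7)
The plan is to prove this by the standard moment-generating-function argument, handling the upper tail first and then the symmetric case for the lower tail. Let $\mu = \E[X]$ and let $p_i = \Pr[X_i = 1]$, so $\mu = \sum_i p_i$. (I will read the statement as $\Pr[X \geq (1+\delta)\mu]$ and $\Pr[X \leq (1-\delta)\mu]$, as the $\mu$ appears to be dropped only for brevity.) For any $t > 0$, Markov's inequality applied to the nonnegative random variable $e^{tX}$ gives
\[
\Pr[X \geq (1+\delta)\mu] = \Pr\bigl[e^{tX} \geq e^{t(1+\delta)\mu}\bigr] \leq \frac{\E[e^{tX}]}{e^{t(1+\delta)\mu}}.
\]
By independence of the $X_i$'s, $\E[e^{tX}] = \prod_i \E[e^{tX_i}] = \prod_i\bigl(1 + p_i(e^t - 1)\bigr)$, and using $1 + z \leq e^z$ coordinatewise we get $\E[e^{tX}] \leq \exp\bigl(\mu(e^t - 1)\bigr)$.

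Substituting and choosing $t = \ln(1 + \delta)$ (which is positive for $\delta > 0$) yields the classical form
\[
\Pr[X \geq (1+\delta)\mu] \leq \left(\frac{e^{\delta}}{(1+\delta)^{1+\delta}}\right)^{\mu}.
\]
To reach the bound in the statement, I would take logs and show
\[
(1+\delta)\ln(1+\delta) - \delta \;\geq\; \frac{\delta^2}{2+\delta},
\]
which is the one nontrivial analytic step. This follows by defining $f(\delta) = (1+\delta)\ln(1+\delta) - \delta - \delta^2/(2+\delta)$, noting $f(0) = 0$, and checking that $f'(\delta) \geq 0$ for $\delta \geq 0$; the derivative simplifies (after combining over a common denominator) to something of the form $\ln(1+\delta) - 2\delta(2+\delta)/(2+\delta)^2$-type expression that one can verify is nonnegative by a further derivative or by standard concavity of $\ln$. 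Multiplying by $\mu$ and exponentiating gives the claimed $e^{-\delta^2 \mu/(2+\delta)}$.

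For the lower tail with $0 \leq \delta \leq 1$, I would run the same argument with $t < 0$, specifically $t = \ln(1 - \delta)$, applied to $\Pr[X \leq (1-\delta)\mu] = \Pr[e^{tX} \geq e^{t(1-\delta)\mu}]$. The analog of the MGF bound gives $\bigl(e^{-\delta}/(1-\delta)^{1-\delta}\bigr)^{\mu}$, and the required auxiliary inequality is $(1-\delta)\ln(1-\delta) + \delta \geq \delta^2/2$ on $[0,1]$, which follows by Taylor expansion of $(1-\delta)\ln(1-\delta)$ or again by checking that the difference function is zero at $\delta = 0$ and has nonnegative derivative.

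The main obstacle is purely the analytic inequality in the exponent for the upper tail, since the weaker and cleaner bound $e^{-\delta^2\mu/3}$ (valid for $\delta \leq 1$) drops right out of $(1+\delta)\ln(1+\delta) - \delta \geq \delta^2/3$; extending to all $\delta \geq 0$ with denominator $2+\delta$ requires the sharper monotonicity argument sketched above. Everything else (Markov, independence factorization of the MGF, the Bernoulli MGF identity) is mechanical.
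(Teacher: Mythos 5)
Your proof is correct; the paper itself gives no proof of this lemma (it is quoted as a standard bound from Chernoff and Mitzenmacher--Upfal), and your moment-generating-function argument is exactly the textbook derivation in that cited reference, including the right reading of the statement's typo as $\Pr[X \geq (1+\delta)\E[X]]$ and $\Pr[X \leq (1-\delta)\E[X]]$. The one step you only sketch, $(1+\delta)\ln(1+\delta)-\delta \geq \delta^2/(2+\delta)$ for $\delta \geq 0$, does go through as you indicate: the difference vanishes at $\delta = 0$, its first derivative is $\ln(1+\delta) - \delta(4+\delta)/(2+\delta)^2$, which also vanishes at $0$, and a second differentiation reduces the claim to $(2+\delta)^3 \geq 8(1+\delta)$, which is immediate.
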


The following parameter setting is particularly useful.

\begin{corollary}%
\label{cor:chernoff_log}
	Let $X = X_1 + X_2 + \ldots$ be the sum of identical independent $0/1$ random variables with $\E[X] = \Omega(\log n)$.  Then $X = \Theta(\E[X])$ with high probability.
\end{corollary}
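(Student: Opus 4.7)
The plan is to obtain both tail bounds by specializing Lemma~\ref{lem:chernoff} with constant $\delta$, and to translate the hidden constant in the $\Omega(\log n)$ hypothesis into the desired $1/n^C$ failure probability.

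First I would unpack the hypothesis: $\E[X] = \Omega(\log n)$ means there is some implicit constant $C_2$ such that $\E[X] \geq C_2 \log n$ for all sufficiently large $n$. The claim $X = \Theta(\E[X])$ with high probability can then be made concrete by picking, say, the constants $1/2$ and $2$ and aiming to show
\[
\Pr\bigl[X \notin [\tfrac{1}{2}\E[X],\, 2\E[X]]\bigr] \leq 1/n^{C}
\]
for any prescribed $C$, provided $C_2$ is chosen large enough in terms of $C$.

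Next I would apply Lemma~\ref{lem:chernoff} twice. Setting $\delta = 1$ in the upper tail gives
\[
\Pr[X \geq 2\E[X]] \leq e^{-\E[X]/3} \leq e^{-(C_2/3)\log n},
\]
and setting $\delta = 1/2$ in the lower tail gives
\[
\Pr[X \leq \tfrac{1}{2}\E[X]] \leq e^{-\E[X]/8} \leq e^{-(C_2/8)\log n}.
\]
Choosing $C_2$ at least $8 C \ln 2$ (or whatever constant is needed depending on the base of $\log$) makes each of these at most $\tfrac{1}{2} n^{-C}$; a union bound then yields the required high-probability guarantee. This final step makes explicit the parenthetical remark in the paper that ``adjusting $C_2$ according to the desired $C$'' suffices.

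There is essentially no obstacle here — the only thing to be careful about is keeping track of whether $\log$ denotes natural or base-$2$ logarithm when translating $e^{-\Theta(\E[X])}$ into $n^{-\Theta(1)}$, but this only affects the concrete value of $C_2$ and not the statement, which is stated up to the implicit constant in $\Omega$.
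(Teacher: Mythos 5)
Your proof is correct and is exactly the argument the paper has in mind; the paper states this as a corollary of Lemma~\ref{lem:chernoff} without spelling out the details, and your specialization to constant $\delta$ together with the "adjust $C_2$ to the desired $C$" step is the intended derivation.
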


We refer to the following bound as the \defn{union bound} (sometimes called Boole's inequality): for any $k$ events $E_1, \ldots, E_k$, we have $\Pr[E_1 \text{ or }\ldots \text{ or } E_k] \leq \Pr[E_1] + \ldots + \Pr[E_k]$.

\section{Function Inversion for LSH}%
\label{sec:function_inversion_for_locality_sensitive_hashing}

In this section we give our basic function inversion data structure and apply it to LSH\@.

\subsection{Function Inversion}%
\label{sec:function_inversion}

The basic building block of this paper is the function inversion data structure of Fiat and Naor~\cite{FiatNaor00}, which gave theoretical bounds for time-space tradeoffs to invert any function $f: [N]\rightarrow[N]$.

\begin{lemma}[\cite{FiatNaor00}]%
\label{lem:fiatnaorworst}
For any function $f:[N]\rightarrow [N]$ that can be evaluated in $T(f)$ time, and any $\sigma > 0$, there exists a data structure that requires $\tilde{O}(N/\sigma)$ space that can, for any $q$, find an $x$ such that $f(x) = q$ with constant probability in 
$\tilde{O}(T(f) \sigma^3)$ time.
\end{lemma}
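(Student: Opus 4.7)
The lemma is the classical time-space tradeoff of Fiat and Naor for function inversion; the natural plan is to reproduce their construction.  The starting point is Hellman's table framework: pick a reduction function $g:[N]\to[N]$ and form chains $x_0,\,(g\circ f)(x_0),\,(g\circ f)^2(x_0),\ldots$ of length $\ell$, storing only the first and last element of each chain in a hash table keyed by the last element.  To invert a query $q$, I would iterate $g\circ f$ starting from $g(q)$ for up to $\ell$ steps; whenever the current value matches a stored endpoint, I would re-walk the corresponding chain from its head and test whether $f$ of each candidate preimage encountered actually equals $q$.

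The ingredients beyond Hellman's original argument (which is tight only for random $f$) that make the construction work for worst-case $f$ are two.  First, stratify the codomain by preimage multiplicity: for each $i\in\{0,1,\ldots,\lceil\log N\rceil\}$, set $B_i=\{y:|f^{-1}(y)|\in[2^i,2^{i+1})\}$ and build an independent family of tables for each stratum, so that within a single stratum $f$ behaves roughly as a function of uniform load.  Second, use many independent reduction functions per stratum; the key analytic step is a birthday-paradox collision bound showing that a fixed $y\in B_i$ is covered by some chain with constant probability as long as the total chain length in the stratum is $\tilde\Omega(|B_i|)$ and each individual chain is short enough that internal self-collisions are rare.  A union bound over $y\in B_i$, together with $O(\log N)$ parallel instantiations, drives the per-query failure probability below any desired constant.

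For parameter setting I would take chain length $\ell=\Theta(\sigma)$ and enough tables per stratum that the Fiat–Naor coverage condition is met; balancing the endpoint counts across the $O(\log N)$ strata yields total space $\tilde{O}(N/\sigma)$.  Each query walks $O(\sigma)$ steps of $g\circ f$ per table; each of the $O(\sigma)$ expected endpoint matches in a table triggers a chain re-walk of $O(\sigma)$ further $f$-evaluations, and summing over tables and strata gives query time $\tilde{O}(T(f)\sigma^3)$, as claimed.

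The main obstacle—and the reason this is Fiat–Naor rather than plain Hellman—is the coverage argument in the second step: one must show that random reductions cover a constant fraction of each stratum even for adversarial $f$, a delicate variance/collision estimate that must be calibrated carefully against the stratification.  Fortunately this is exactly the content of Fiat and Naor's analysis and can be invoked directly once the stratified Hellman scheme above is in place.
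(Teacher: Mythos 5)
The paper does not prove this lemma; it is stated as a citation to Fiat and Naor~\cite{FiatNaor00} and used as a black box, so there is no internal proof to compare against. Your reconstruction gets the skeleton right: Hellman-style tables built on chains of $g\circ f$ with chain length $t=\Theta(\sigma)$, stored endpoints looked up by value, and multiple independent tables with distinct reduction functions, yielding the $TS^3 = N^3$ tradeoff (equivalently, space $\tilde{O}(N/\sigma)$ and time $\tilde{O}(\sigma^3)$).

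However, two points in your sketch diverge from what Fiat and Naor actually do and would need repair if this were to stand as a proof. First, the ``stratify the codomain by preimage multiplicity into $O(\log N)$ buckets $B_i$'' step is not their technique; Fiat and Naor instead use a single cut-off: the few image points of sufficiently high in-degree are stored \emph{explicitly} with representative preimages (there cannot be many, so this costs little space), and the remaining, low-in-degree part of the image is handled by the Hellman tables with $g$ drawn from a $k$-wise-independent family. It is the $k$-wise independence, combined with their rooted-forest coverage lemma, that makes the birthday-style argument go through for a worst-case $f$; a per-stratum scheme as you describe would still face the core coverage difficulty within each stratum, and you resolve it only by appealing to ``exactly the content of Fiat and Naor's analysis.'' Since that coverage lemma is the entire nontrivial content of the theorem, the proposal is at this point a restatement rather than a proof. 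Second, the running-time accounting is inconsistent: with $m$ chains per table and $r$ tables, space $mr=\tilde{O}(N/\sigma)$, chain length $t=\Theta(\sigma)$, and coverage $mrt\gtrsim N$ force $r=\tilde{\Omega}(\sigma^2)$ and $m=\tilde{O}(N/\sigma^3)$, in which case the expected number of spurious endpoint matches per table is $mt/N = \tilde{O}(1/\sigma^2)$, not $O(\sigma)$. The $\sigma^3$ factor in the query time comes from the product of the number of tables $r=\tilde{\Theta}(\sigma^2)$ and the chain-walk length $t=\Theta(\sigma)$, with false alarms contributing only a lower-order term, rather than from $\sigma$ false alarms per table each triggering a length-$\sigma$ re-walk across $\sigma$ tables as you describe.
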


This result was recently generalized by Golovnev, Guo, Peters, and Stephens-Davidowitz to give improved bounds for large $\sigma$, namely, query time $\tilde{O}(T(f)\min\{\sigma^3, \sigma\sqrt{N}\})$~\cite{GolovnevGuPe23}.  However, their model allows either access to a large random string that does not count toward the space bound, or a large fixed advice string for each $N$. In contrast, Lemma~\ref{lem:fiatnaorworst} in~\cite{FiatNaor00} uses explicit hash functions.  It is plausible that the results of~\cite{GolovnevGuPe23} would work in such a setting as well; this would immediately improve Theorems~\ref{thm:basic_lsh} and~\ref{thm:fiatnaor_all} for large $\sigma$.

\paragraph{Obstacles to Applying Function Inversion to LSH.} 
Function inversion has immediate potential to improve LSH performance: rather than storing a reverse lookup table to find $\ell_i^{-1}(q)$, we can use function inversion to find $\ell_i^{-1}(q)$ instead.  We obtain the same candidate points, so correctness is guaranteed, but we improve the space usage by a factor $\tilde{\Theta}(\sigma)$. (We describe this strategy in more detail in Section~\ref{sec:function_inversion_for_lsh} below.)

However, Lemma~\ref{lem:fiatnaorworst} cannot be immediately applied to LSH\@.  The first issue is that the codomain $D$ of an LSH is unlikely to be $[N]$.  This can be handled (in short) by hashing the output using a hash function $h: D\rightarrow [N]$ from a universal hash family; this technique is standard in the literature (see~\cite{AronovCaDa23,CorriganGibsKogan19,GolovnevGuHo20,KopelowitzPorat19,GolovnevGuPe23}).

Even after reducing the codomain of the function, Lemma~\ref{lem:fiatnaorworst} does not suffice for our purposes.  When using locality-sensitive hashing, we must compare the query to \emph{all} points in its preimage, whereas Lemma~\ref{lem:fiatnaorworst} only returns a single point.  In the remainder of this subsection, we show in fact we can obtain all $x$ with $f(x) = q$.  Note that this result requires an additive $O(N)$ space---it is only useful when we want to invert multiple hash functions over the same set $[N]$.

\paragraph{The All-Function Inversion Data Structure.} 
We now give our data structure, the \defn{all-function-inversion data structure}, which generalizes Lemma~\ref{lem:fiatnaorworst} to handle the above issues.  We first describe the data structure, and then prove its performance in Theorem~\ref{thm:fiatnaor_all}.

We describe how we can use sampling to find all points that a function $f: [N]\rightarrow D$ maps to a given value, and briefly outline the idea behind why our strategy is correct (the proof of Theorem~\ref{thm:fiatnaor_all} argues correctness more formally).  Along the way, we will handle the case where $f$ has large codomain.

Let's focus on inverting a single $f$ for a query $q$.
Assume momentarily that we are given $\kappa = |f^{-1}(q)|$ and that $1 < \kappa < o(N/\log N)$ (if $\kappa = 1$ then Lemma~\ref{lem:fiatnaorworst} suffices; if $\kappa = \Omega(N/\log N)$ then we can find $f^{-1}(q)$ by applying $f$ to all possible $N = \tilde{O}(|f^{-1}(q)|)$ elements in the domain).  Then we create $\Theta(\kappa \log N)$ sets, each obtained by sampling every element of $[N]$ with probability $1/\kappa$; denote these sets $\mathcal{N}^k$ for $k = 1\ldots \Theta(\kappa\log N)$.  

With high probability, since $N/\kappa = \Omega(\log N)$, each set $\mathcal{N}^k$ has $C_1N/\kappa$ elements for some constant $C_1$.\footnote{Assume $C_1$ is chosen so that $C_1N/\kappa$ is an integer.}  
Let us store all elements of $\mathcal{N}^k$ in an array of size $|\mathcal{N}^k|$---that way we can find the $i$th element of $\mathcal{N}^k$, denoted $\mathcal{N}^k[i]$, in $O(1)$ time (this requires an additional $\tilde{O}(N)$ space).
Then we can define a function $f_k: [C_1N/\kappa] \rightarrow [C_1N/\kappa]$ which simulates the behavior of $f$ on $\mathcal{N}^k$ as $f_k(i) = h_k(f(\mathcal{N}^k[i]))$ where $h_k: D\rightarrow [C_1N/\kappa]$ is from a universal family.
We build the data structure from Lemma~\ref{lem:fiatnaorworst} for each $f_k$.

Call an element $x\in f^{-1}(q)$ a \defn{singleton} for $\mathcal{N}^k$ if $x$ is the only element in $f^{-1}_k(q)$ (see also~\cite{KopelowitzPorat19}).  By standard Chernoff bounds (Corollary~\ref{cor:chernoff_log}), $x$ is a singleton for $\Theta(\log n)$ sets $\mathcal{N}^k$.  If $x$ is a singleton, then the data structure from Lemma~\ref{lem:fiatnaorworst}  returns $x$ with constant probability; again by Corollary~\ref{cor:chernoff_log}, $x$ is returned for some $\mathcal{N}^k$ with high probability.

We remove the assumption that $\kappa$ is known up front using repeated doubling.  We begin with $\kappa = 2$. We run the above algorithm 
for $f_k$ for $k = 1, \ldots, \kappa\log N$.  
Specifically, for each $k$, we query for a $j = f^{-1}_k(h_k(f(q)))$, look up $y = \mathcal{N}^k[j]$ in the array, and check that $f(y) = f(q)$.
If over all $k$ queries at least $\kappa$ distinct elements $x$ are found with $f(x) = f(q)$, we double $\kappa$ and repeat; otherwise we return all elements found so far as $f^{-1}(q)$.

We call the above data structure the \defn{all-function-inversion data structure}.

\begin{theorem}%
\label{thm:fiatnaor_all}
Consider the all-function inversion data structure built with parameter $\sigma$ for a set of $R \leq N$ functions $f_1, \ldots f_R$, where $f_i: [N] \rightarrow D$ and $f_i$ can be calculated in $T(f)$ time for all $i$.
This data structure requires $\tilde{O}(N +  NR/\sigma)$ space, 
can be built in $\tilde{O}(NR)$ time, 
and can
	find $f^{-1}_i(q)$ for any query $q\in [N]$ and any $f_i$ with high probability.  Each query requires 
	\[
		\tilde{O}\left(T(f) \left(1 + |f^{-1}_i(q)|\right)\sigma^3\right)
	\]
 expected time.
\end{theorem}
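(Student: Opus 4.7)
The plan is to verify the four claims---space, preprocessing, correctness, and expected query time---largely independently, organizing everything around the shared sampled sets $\mathcal{N}^k$. Because the $\mathcal{N}^k$'s depend only on $[N]$, they can be built once and reused across all $R$ input functions. Corollary~\ref{cor:chernoff_log} gives $|\mathcal{N}^k| = \Theta(N/\kappa)$ with high probability at scale $\kappa$, so storing the samples as arrays across the $\Theta(\kappa \log N)$ sets per scale and the $O(\log N)$ scales costs $\tilde{O}(N)$---this is the additive $\tilde{O}(N)$ term. For each function $f_i$ and each sampled set I would instantiate Lemma~\ref{lem:fiatnaorworst} on $f_{i,k}$ with parameter $\sigma$, contributing $\tilde{O}(|\mathcal{N}^k|/\sigma)$ space and $\tilde{O}(|\mathcal{N}^k|\, T(f))$ preprocessing. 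Summing $\Theta(\kappa\log N) \cdot \tilde{O}(N/(\kappa \sigma)) = \tilde{O}(N/\sigma)$ per scale, telescoping across scales, and multiplying by $R$ yields $\tilde{O}(NR/\sigma)$ space and, absorbing $T(f)$ into $\tilde{O}$, $\tilde{O}(NR)$ preprocessing.

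Correctness is the technical heart of the proof. Fix $f_i$, $q$, and let $\kappa^* = |f_i^{-1}(q)|$; let $\kappa$ be the smallest power of two with $\kappa \geq \kappa^*$. For a fixed preimage $x \in f_i^{-1}(q)$ and a fixed $\mathcal{N}^k$ at this scale, I would lower-bound the singleton probability by decomposing it into: (a) $x \in \mathcal{N}^k$, probability exactly $1/\kappa$; (b) no other member of $f_i^{-1}(q)$ also lies in $\mathcal{N}^k$, probability $\geq (1-1/\kappa)^{\kappa^*-1} \geq (1-1/\kappa)^{\kappa}$, a positive constant for $\kappa \geq 2$; and (c) no $y \in \mathcal{N}^k$ with $f_i(y) \neq q$ causes the spurious hash collision $h_k(f_i(y)) = h_k(q)$, which is $\Omega(1)$ by universality of $h_k$ after taking the codomain-size constant $C_1$ large enough to make the expected number of such collisions a small fraction. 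The product is $\Omega(1/\kappa)$. When $x$ is a singleton, Lemma~\ref{lem:fiatnaorworst} returns its index with constant probability, and the explicit post-check $f_i(\mathcal{N}^k[j]) = q$ removes residual hash-collision false positives. Summing over the $\Theta(\kappa \log N)$ samples at scale $\kappa$, the expected number of successes per $x$ is $\Omega(\log N)$, so Corollary~\ref{cor:chernoff_log} gives w.h.p.\ recovery; a union bound over the at-most-$N$ preimages then completes correctness at that scale. For the doubling wrapper, the first scale $\kappa \geq \kappa^*$ returns exactly $\kappa^* < \kappa$ distinct preimages and so halts with the complete answer, while at scales $\kappa < \kappa^*$ a standard Fiat--Naor success analysis shows $\geq \kappa$ distinct preimages are found w.h.p., ruling out premature termination.

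The query-time bound finishes by summation: at scale $\kappa$ we issue $\Theta(\kappa \log N)$ Fiat--Naor queries costing $\tilde{O}(T(f) \sigma^3)$ each, for $\tilde{O}(T(f) \kappa \sigma^3)$ per scale; the geometric progression over the $O(\log \kappa^*)$ doubling steps is dominated by the final scale, yielding the claimed $\tilde{O}(T(f) (1 + |f_i^{-1}(q)|) \sigma^3)$ expected total. I expect step (c) of the singleton analysis to be the main obstacle: the codomain of $h_k$ has size only a constant factor larger than $|\mathcal{N}^k|$ itself, so a naive pairwise union bound on spurious $h_k$-collisions is essentially vacuous, and the argument must either tune $C_1$ delicately or exploit the $f_i$-verification step to absorb residual collisions without inflating the query cost. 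A secondary concern is the stopping-rule analysis for the doubling wrapper, which requires a birthday-style estimate confirming that Fiat--Naor queries at sub-$\kappa^*$ scales recover enough distinct preimages to force continuation with high probability.
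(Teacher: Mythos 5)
Your accounting of space, preprocessing, the geometric sum for query time, and the singleton analysis for the easy case ($\kappa \geq |f_i^{-1}(q)|$) all match the paper's proof in structure. Your three-event decomposition (a), (b), (c) plus the Fiat--Naor success event is precisely the paper's Lemma~\ref{lem:singleton_general} restricted to the singleton case $X = \{x\}$, and your concern about step (c) being the ``main obstacle'' is in fact the part that works cleanly: fixing $y\in[N]$ and multiplying the probability of sampling $y$ into $\mathcal{N}^k$ (roughly $1/\kappa$) by the hash-collision probability (roughly $\kappa/(C_1 N)$) gives $1/(C_1 N)$ per $y$, so summing and applying Markov costs only a factor $1/C_1$, no delicate tuning needed.

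The genuine gap is the step you flag as a ``secondary concern'' and propose to close with ``a standard Fiat--Naor success analysis'' or ``a birthday-style estimate'': showing that at scales $\kappa < |f_i^{-1}(q)|$, the data structure finds at least $\kappa$ \emph{distinct} preimages and therefore does not terminate early. A birthday-style argument implicitly assumes that a successful Fiat--Naor query returns a uniformly random element of the preimage restricted to $\mathcal{N}^k$, but Lemma~\ref{lem:fiatnaorworst} gives no such guarantee---it only promises \emph{some} preimage element with constant probability, and in principle could keep returning the same one. The paper calls this out explicitly and works around it with an epoch argument: it generalizes the singleton lemma to arbitrary subsets $X\subseteq f_i^{-1}(q)$ of not-yet-found elements (Lemma~\ref{lem:singleton_general}), showing that the probability a query returns something in $X$ is $\Omega(\min\{1, |X|/\kappa\})$. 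The randomness that drives progress comes from the independent resampling of $\mathcal{N}^k$ and the independent $h_{i,k}$, not from any uniformity of the Fiat--Naor output. Partitioning the $\Theta(\kappa\log N)$ values of $k$ into epochs of length $\Theta(\kappa)$ and showing each epoch halves the count of missing elements with constant probability then yields the high-probability bound after $\Theta(\log N)$ epochs. Without Lemma~\ref{lem:singleton_general} in its general form, your argument for the doubling wrapper does not go through.
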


Before proving Theorem~\ref{thm:fiatnaor_all} we prove a useful intermediate lemma that treats the idea of a singleton in more generality.  
In the above discussion, we used $\kappa$ functions $f_k$ to invert some function $f$; from now on, 
we refer to these as the $k$ functions $f_{i,k}$ used to invert each function $f_i$.

\begin{lemma}%
\label{lem:singleton_general}
For any $q\in [N]$, any $\kappa \geq 2$, 
and any $k$ (with $f_i$ and $f_{i,k}$ as defined above),
for any 
set $X\subseteq f_i^{-1}(q)$ with $|f_i^{-1}(q) \setminus X| \leq 2\kappa$,
the probability that 
some element of $X$ is returned when querying the function inversion data structure for $f_{i,k}$ is $\Omega(1)$ if $|X| > \kappa/2$, and $\Omega(|X|/\kappa)$ if $|X| \leq \kappa/2$.
\end{lemma}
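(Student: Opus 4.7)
The plan is to identify a ``good'' sampling event whose occurrence forces the Fiat--Naor data structure of Lemma~\ref{lem:fiatnaorworst} built for $f_{i,k}$ to return an element of $X$ with constant probability. Let $Y = f_i^{-1}(q) \setminus X$ (so $|Y| \leq 2\kappa$ by assumption) and let $Z = \{z \in [N] \setminus f_i^{-1}(q) : h_k(f_i(z)) = h_k(f_i(q))\}$ be the set of ``false collisions'' produced by the universal hash $h_k$. Define the good event $G$ as the conjunction of (i) $X \cap \mathcal{N}^k \neq \emptyset$; (ii) $Y \cap \mathcal{N}^k = \emptyset$; and (iii) $Z \cap \mathcal{N}^k = \emptyset$. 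Conditional on $G$, every index $j \in [C_1 N/\kappa]$ with $f_{i,k}(j) = h_k(f_i(q))$ satisfies $\mathcal{N}^k[j] \in X$, so Lemma~\ref{lem:fiatnaorworst} returns such a $j$ with constant probability, and the array lookup $\mathcal{N}^k[j]$ recovers an element of $X$. It therefore suffices to lower bound $\Pr[G]$ by $\Omega(1)$ when $|X| > \kappa/2$ and by $\Omega(|X|/\kappa)$ when $|X| \leq \kappa/2$.

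Because each element of $[N]$ is included in $\mathcal{N}^k$ independently with probability $1/\kappa$ and this sampling is independent of the draw of $h_k$, and because $X$, $Y$, and $[N] \setminus f_i^{-1}(q)$ are disjoint, the three events are mutually independent and I bound each factor separately. For (ii), $\Pr[Y \cap \mathcal{N}^k = \emptyset] = (1 - 1/\kappa)^{|Y|} \geq (1 - 1/\kappa)^{2\kappa}$, a positive constant for every $\kappa \geq 2$. For (iii), universality gives $\Pr[z \in Z] \leq \kappa/(C_1 N)$ for each of at most $N$ candidates $z$, so by independence the expected size of $Z \cap \mathcal{N}^k$ is at most $N \cdot (\kappa/(C_1 N)) \cdot (1/\kappa) = 1/C_1$, and Markov's inequality gives $\Pr[(\mathrm{iii})] \geq 1 - 1/C_1$, a constant close to $1$ for $C_1$ chosen sufficiently large. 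For (i), if $|X| > \kappa/2$ then $\Pr[X \cap \mathcal{N}^k \neq \emptyset] \geq 1 - e^{-|X|/\kappa} \geq 1 - e^{-1/2} = \Omega(1)$; if $|X| \leq \kappa/2$ then a second-order Taylor bound on $(1-1/\kappa)^{|X|}$ yields
\[
1 - (1-1/\kappa)^{|X|} \;\geq\; \frac{|X|}{\kappa} - \binom{|X|}{2}\frac{1}{\kappa^2} \;=\; \frac{|X|}{\kappa}\left(1 - \frac{|X|-1}{2\kappa}\right) \;\geq\; \frac{3}{4}\cdot\frac{|X|}{\kappa}.
\]
Multiplying the three independent factors and then the constant success probability of Lemma~\ref{lem:fiatnaorworst} gives the two claimed bounds.

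The main obstacle I anticipate is the bookkeeping around independence: I need to verify that the sampling of $\mathcal{N}^k$ really decomposes across the disjoint pieces $X$, $Y$, and $[N]\setminus f_i^{-1}(q)$, and that $h_k$ can be treated as independent of the sampling so that the false-collision expectation in (iii) cleanly factors as the product of the two relevant probabilities. A secondary subtlety is choosing the constant $C_1$ large enough that $1 - 1/C_1$ is a useful constant here, while simultaneously preserving the high-probability bound $|\mathcal{N}^k| = C_1 N / \kappa$ assumed elsewhere in the construction; these requirements are compatible because $C_1$ is a free tunable constant.
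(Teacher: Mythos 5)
Your proof is correct and follows essentially the same approach as the paper: decompose success into a conjunction of (a) sampling at least one element of $X$ into $\mathcal{N}^k$, (b) sampling nothing from $Y = f_i^{-1}(q)\setminus X$, (c) sampling no false collision under $h_k$, and (d) the Fiat--Naor structure succeeding, then argue the factors are independent (by disjointness of the relevant subsets of $[N]$ and independence of $h_k$ and the Fiat--Naor internal randomness) and multiply. Your treatment of the crucial factor (a) is actually cleaner and a bit tighter than the paper's: the paper manipulates the full alternating binomial sum at some length, whereas your second-order Bonferroni/inclusion-exclusion bound $1-(1-1/\kappa)^{|X|}\geq \frac{|X|}{\kappa}-\binom{|X|}{2}\kappa^{-2}\geq\frac{3}{4}\cdot\frac{|X|}{\kappa}$ (using $|X|\leq\kappa/2$) gets there in one step, and your bound $1-e^{-1/2}$ for the $|X|>\kappa/2$ case is in fact the correct constant where the paper's stated ``$\geq 1/2$'' is slightly too optimistic for large $\kappa$ (though both are $\Omega(1)$). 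The independence subtlety you flagged is real but resolves exactly as you sketched: the indicator variables for membership in $\mathcal{N}^k$ over the disjoint sets $X$, $Y$, and $[N]\setminus f_i^{-1}(q)$ are mutually independent, $h_k$ is drawn independently of the sample, so $\E[|Z\cap\mathcal{N}^k|]$ factors as claimed and Markov applies.
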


\iffull
\begin{proof}
	If all of the following four events occur, then some element of $X$ is returned when querying $f_{i,k}$:
	\begin{enumerate}
		\item $f_i^{-1}(q) \cap \mathcal{N}_k \subseteq X$, 
		\item $X \cap \mathcal{N}_k \neq \emptyset$, 
		\item no $y \in [N] \setminus f_i^{-1}(q)$ has  $f_{i,k}(y) = f_{i,k}(q)$, and
		\item the function inversion data structure correctly returns an element in $f^{-1}_{i,k}(h_k(q))$.
	\end{enumerate}
	These events are independent: the first depends only on what elements of $f_i^{-1}(q)\setminus X$ are sampled in $\mathcal{N}^k$; 
	the second depends on what elements of $X$ are sampled in $\mathcal{N}^k$ (with $X\subseteq f_i^{-1}(q)$);
	the third depends only on $h_k$ and what elements in $[N]\setminus f_i^{-1}(q)$ are sampled in $\mathcal{N}_k$; 
	the fourth depends on the internal randomness of the function inversion data structure.  We bound the probability of each event and multiply to obtain the lemma.  

	First event: let $Y = f^{-1}(q) \setminus X$; we have $|Y| \leq \kappa$.  
	The probability that $Y\cap \mathcal{N}^k = \emptyset$ is $(1 - 1/\kappa)^{|Y|} \geq (1 - 1/\kappa)^{2\kappa} \geq 1/16$.  

	Second event:\footnote{%
	This bound is well-known but we could not find a black-box reference that works for all $|X|$; the proof is included for completeness.} the probability that $X\cap \mathcal{N}_k \neq \emptyset$ is $1 - (1 - 1/\kappa)^{|X|}$.  This expression is increasing in $|X|$ and $\kappa$, so if $|X| \geq \kappa/2$ then since $\kappa \geq 2$ we have $1 - (1 - 1/\kappa)^{|X|} \geq 1/2$.  If $|X| < \kappa/2$,  
	{\allowdisplaybreaks
	\begin{align*}
		1 - (1 - 1/\kappa)^{|X|} &= 1 - \sum_{\ell = 0}^{|X|} \binom{|X|}{\ell}\left(\frac{-1}{\kappa}\right)^\ell\\
								 &\geq \frac{|X|}{\kappa} - \frac{(|X|-1)^2}{2\kappa^2} - \sum_{\ell = 4}^{|X|} \binom{|X|}{\ell}\left(\frac{-1}{\kappa}\right)^\ell \\
								 &\geq \frac{|X|}{\kappa} - \frac{|X|}{2\kappa} - \sum_{\ell' = 2}^{|X|/2} \binom{|X|}{2\ell'}\left(\frac{-1}{\kappa}\right)^{2\ell'} \\
								 &\geq \frac{|X|}{2\kappa} - \sum_{\ell' = 2}^{|X|/2} \left(\frac{e|X|}{2\ell'}\right)^{2\ell'}\left(\frac{-1}{\kappa}\right)^{2\ell'} \\
								 &\geq \frac{|X|}{2\kappa} - \sum_{\ell' = 2}^{|X|/2} \left(\frac{|X|}{2\kappa}\right)^{2\ell'} \\
								 &= \frac{|X|}{2\kappa} - \sum_{\ell'' = 0}^{|X|/2 - 2} \left(\frac{|X|}{2\kappa}\right)^{2\ell'' + 2} \\
								 &= \frac{|X|}{2\kappa} - \left(\frac{|X|}{2\kappa}\right)^2\sum_{\ell'' = 0}^{|X|/2 - 2} \left(\frac{|X|}{2\kappa}\right)^{2\ell''} \\
								 &\geq \frac{|X|}{2\kappa} - \frac{|X|}{8\kappa}\sum_{\ell'' = 0}^{\infty} \left(1/4\right)^{2\ell''} ~~\geq \frac{|X|}{4\kappa} 
\end{align*}
}
(Changing $\ell$ to $\ell'$ in the second step effectively removes all odd terms from the sum; in the fourth step we used $\binom{x}{y} \leq (ex/y)^{y}$.)

Third event: fix a $y\in [N]\setminus f_i^{-1}(q)$. 
The probability that $y\in \mathcal{N}^k$ is $1/\kappa$; let $y'$ satisfy $y = \mathcal{N}^k$.
Since $y\notin f_i^{-1}(q)$, if $f_{i,k}(y') = h_{i,k}(f_i(q))$, we must have $h_{i,k}(f_i(y)) = h_{i,k}(f_i(q))$ with $f_i(y) \neq f_i(q)$.  
By universality of $h_{i,k}$, $\Pr(h_{i,k}(f_i(y)) = h_{i,k}(f_i(q)) ~|~ y\in \mathcal{N}^k) \leq \kappa/(C_1 N)$.  
Multiplying, the probability that a fixed $y\in [N]\setminus f_i^{-1}(q)$ has $f_{i,k}(y) = f_{i,k}(q)$ is at most $1/(C_1 N)$.  Summing, in expectation we have $1/C_1$ such elements $y$; by Markov's inequality, the probability that there is at least $1$ such element is at most $1/C_1$.

Fourth event: occurs with constant probability by Lemma~\ref{lem:fiatnaorworst}.
\end{proof}
\else
	\begin{proof}
		See Appendix~\ref{sec:omitted_proofs}.
	\end{proof}
\fi

\begin{proof}[Proof of Theorem~\ref{thm:fiatnaor_all}]
	We begin with space and preprocessing time: we show that for each $\kappa$, the data structure requires $\tilde{O}(N/\sigma)$ space and $\tilde{O}(N)$ preprocessing time with high probability.   The function inversion data structure built on each $f_{i,k}$ requires $\tilde{O}(|\mathcal{N}^k|/\sigma)$ space and $\tilde{O}(|\mathcal{N}^k|)$ preprocessing time, plus $O(\log N)$ space to store $h_{i,k}$.  By Chernoff bounds (Corollary~\ref{cor:chernoff_log}), $\sum |\mathcal{N}^k| = O(N\log N)$ with high probability; summing over all $\log N$ values of $\kappa$ gives the bound.  

	Now, correctness. We split into two claims: first, that if $\kappa > |f_i^{-1}(q)|$, then (across all queries to $f_{i,k}^{-1}$) all elements in $f_i^{-1}(q)$ are found with high probability; second, that if $\kappa \leq |f_i^{-1}(q)|$, that $\kappa$ distinct elements from $f^{-1}(q)$ are found (and thus $\kappa$ is doubled and the search continues) with high probability.

	First, consider the case where $\kappa$ is the smallest power of $2$ satisfying $\kappa > |f^{-1}(q)|$.  
	Since there are fewer than $\kappa$ elements in $f^{-1}(q)$, the all-function-inversion data structure must return after this round; thus we are left to show that all elements of $f^{-1}(q)$ are returned in this round.  Fix an $x\in f^{-1}(q)$.  By Lemma~\ref{lem:singleton_general} with $X = \{x\}$ (and thus $|Y| = |f^{-1}(q)| \leq 2\kappa$), $x$ is returned by the data structure for $f_{i,k}$ with probability $\Omega(1/\kappa)$.  Since we choose a new $\mathcal{N}^k$ and $h_{i,k}$ for each $f_{i,k}$ these events are independent; thus by Corollary~\ref{cor:chernoff_log}, over all $\Theta(\kappa\log N)$ values of $k$, one returns $x$ with high probability.  Taking a union bound over all $< N$ values of $x$, all are returned with high probability.

	Now, consider $\kappa \leq f^{-1}_i(q)$.  A technical note on this case: the proof of in Lemma~\ref{lem:fiatnaorworst} from~\cite{FiatNaor00} does not as-is guarantee that a uniform random element is returned from $f^{-1}(q)$. A stronger version of this lemma that did provide such a guarantee would simplify this proof.  

	Partition the $\Theta(\kappa\log n)$ values of $k$ into \defn{epochs} of $C_2\kappa$ consecutive values.  There are $\Theta(\log n)$ epochs.  For the $j$th epoch, 
	let $F_j$ be the set of elements in $f^{-1}(q)$ that have been found by the data structure so far, and 
	let $r_j = \kappa - |F_j|$.  We say that the $j$th epoch is \defn{successful} if $r_{j+1} \leq r_j/2$.    After $\log \kappa + 2 = O(\log N)$ successful epochs, $\kappa$ elements have been found and we are done.   We now show that an epoch is successful with constant probability; a Chernoff bound (Corollary~\ref{cor:chernoff_log}) over the $\Theta(\log N)$ epochs gives the proof.

	Fix an epoch $j$ with $|F_j| < \kappa$; we want to show that it is successful with constant probability.  For any $k$ within epoch $j$, let $F$ be the elements found so far; assume that $F$ satisfies $\kappa - |F| > r_j/2$ (i.e.\ assume the epoch is not yet successful).  Note that since $r_j \geq 0$ this implies that $|F| < \kappa$.  Let $X \gets f^{-1}(q)\setminus F$; since $|f^{-1}(q)| \geq \kappa$ 
	we have that $|X| \geq \kappa - |F| > r_j/2$.  Applying Lemma~\ref{lem:singleton_general} to $X$ (with $F = f^{-1}(q)\setminus X$; notice $|F| < \kappa - r_j/2 \leq 2\kappa$), we have that a new element is found for a given $f_{i,k}$ with probability $\Omega(r_j/(2\kappa))$.  
	Consider the random variable representing if each $f_{i,k}$ increases $|F|$.  These are independent Bernoulli trials (they are independent since we choose a new $h_k$ and build a new function inversion data structure for each $f_{i,k}$); the expected number of elements found in the epoch is $\Omega(r_j)$, and the standard deviation is $\Theta(\sqrt{r_j})$; thus with constant probability the number of elements found is $\Omega(r_j)$ (i.e.\ is within one standard deviation of the expectation) and the epoch is successful.

	The time required is (with change of variable $\ell = \log_2 \kappa$ and defining $\lg x = \lfloor \log_2 \min\{x, 2\}\rfloor$):
	\[
		\sum_{\ell = 1}^{1 + \lg |f^{-1}(q)|} \tilde{O}(2^\ell \log N \sigma^3).
	\]
	Summing gives the desired bound.  
\end{proof}

\subsection{Applying Function Inversion to LSH}%
\label{sec:function_inversion_for_lsh}

We reiterate how LSH can help us solve ANN\@.  Consider a locality-sensitive hash family $\mathcal{L}$.  To solve ANN using $\mathcal{L}$, classically one selects\footnote{From now on we assume $p_1$ is a constant for $\mathcal{L}$ for simplicity; our results easily generalize.} $R = \Theta(n^{\rho})$ hashes $\ell_1\ldots, \ell_R$ from $\mathcal{L}$.  For each such $\ell_i$, one stores a reverse lookup table allowing us to find $x$ given $\ell_i(x)$, for all $x\in S$.  
	This requires $O(n)$ space per lookup table, giving $O(n^{1 + \rho})$ space overall. On a query, one iterates through $i  \in \{1,\ldots,R\}$, using the lookup table to find all $x\in S$ with $\ell_i(x) = \ell_i(q)$, giving $O(n^{\rho})$ expected query time.

Function inversion gives a space-efficient replacement for the lookup table---after all, the point of the lookup table is really to find $\ell_i^{-1}(\ell_i(q))$.  We now describe this strategy in more detail.

Consider 
$R = O(n^{\rho})$ hash functions $\ell_1, \ldots, \ell_R$ from a locality-sensitive hash family $\mathcal{L}$.
We define a new sequence of functions $\hat{\ell}_1, \ldots, \hat{\ell}_R$: for all $i\in [R]$ and $j\in [n]$, let $\hat{\ell}_i(j) = \ell_i(x_j)$.  (Thus, the domain of $\hat{\ell}_i$ is $[n]$ for all $i$.)

We apply Theorem~\ref{thm:fiatnaor_all} to invert all functions $\hat{\ell}_i$ with space savings $\sigma \gets n^s$.  The query algorithm follows immediately: for $i \in \{1,\ldots,R\}$, rather than looking up all values of $x$ with $\ell_i(x) = \ell_i(q)$ in the lookup table, we can instead query $\hat{\ell}^{-1}(\ell(q))$ to obtain the same candidate points, in $\tilde{O}(n^{3s})$ time per returned candidate point.

Thus, we store the following: the original set $S$, the hash functions $\ell_1,\ldots, \ell_R$, and an all-function-inversion data structure for each of $\hat{\ell}_1, \ldots, \hat{\ell}_R$.  We emphasize that we do not store the reverse lookup table for $\ell_1, \ldots, \ell_R$---we just store enough information to evaluate each function.

With this strategy we can obtain Theorem~\ref{thm:basic_lsh}.

\basiclsh*

\begin{proof}
	We build the all-function-inversion data structure with $\sigma = n^s$ for each $\ell_i$.  
	The space bound follows immediately.
	Correctness follows from correctness of the LSH---if a point $x$ with $d(x, q) \leq r$ has $\ell_i(x) = \ell_i(q)$ for some $i$  the all-function-inversion data structure returns $x$ on $\ell_i^{-1}(q)$ with high probability.  Thus, if $R$ was set so that a close point was found with probability $.9$, we find the close point with probability $.9(1 - 1/\poly(n))$.  Slightly increasing $R$ (doubling suffices) brings the probability above $.9$.

	The total query time is $\sum_i \tilde{O}(T n^{3s} (1 + |\ell_i^{-1}(q)|))$.  We have that $\E[\sum_i |\ell_k^{-1}(q)] = n^{\rho}$ by definition of the LSH and $\sum_i 1 = R = O(n^{\rho})$; substituting gives the lemma.
\end{proof}

\paragraph{Discussion.}
Setting $s = \rho$ gives a near-linear space data structure using any locality-sensitive hash family with query time $\tilde{O}(T n^{4\rho})$ where $T$ is the time to evaluate a function from the family.  Thus we obtain a black box near-linear space data structure with nontrivial query time for \emph{any} LSH family with $n^{o(1)}$ evaluation time, $O(n^{1-\rho})$ space,
and $\rho < 1/4$.  
We point out that most LSH families we are aware of easily meet these requirements: for example, a hash sampled from the classic Indyk-Motwani LSH family~\cite{IndykMotwani98} has $O(\log n)$ evaluation time and $O(\log n)$ space (as a hash from their family consists of $O(\log n)$ sampled dimensions).

For Euclidean ANN, Andoni and Indyk achieved an LSH with $\rho = 1/c^2$ and evaluation time $n^{o(1)}$~\cite{AndoniIndyk06}.  Applying Theorem~\ref{thm:basic_lsh} gives a data structure with $n^{1 + o(1)}$ space and query time $n^{4/c^2 + o(1)}$.  As seen in Table~\ref{tab:eucANNhistory} this is already competitive with many past space-efficient Euclidean ANN data structures.  

That said, the performance of this black box approach is worse than can be obtained by more recent results, e.g.~\cite{Christiani17,AndoniLaRa17}.  Improving on~\cite{Christiani17,AndoniLaRa17} requires more sophisticated techniques, which we give in Section~\ref{sec:faster_euclidean_linear_space_ann}.

\section{Near-Linear-Space Euclidean ANN}%
\label{sec:faster_euclidean_linear_space_ann}

In this section we use function inversion to improve the performance of 
the most performant near-linear-space ANN data structure for Euclidean distance.
Specifically, we improve performance of the space-efficient data structure for Euclidean ANN given by Andoni et al. in~\cite{AndoniLaRa17}; 
we call this data structure ALRW.

First, let us give some intuition for this improvement.
ALRW is able to achieve space $n^{1 + \rho_u + o(1)}$ and expected query time $n^{\rho_q + o(1)}$ for any $\rho_u, \rho_q$ satisfying
\begin{equation}
	\label{eq:alrw_l2}
	c^2\sqrt{\rho_q} + (c^2 - 1)\sqrt{\rho_u} \geq \sqrt{2c^2 - 1}.
\end{equation}
We can set $\rho_u = 0$ and obtain that ALRW achieves query time $n^{ALRW(c) + o(1)}$ with $ALRW(c) = (2c^2 - 1)/c^4$.  
Note that this is significantly better than we would get by applying Theorem~\ref{thm:basic_lsh} to these bounds directly: Equation~\ref{eq:alrw_l2} allows $\rho_q = \rho_u = 1/(2c^2 - 1)$, after which applying function inversion to obtain near-linear space would give query time $n^{4/(2c^2 - 1) + o(1)}$.

The key idea behind our improvement is to note that the query time of ALRW begins increasing quickly as $\rho_u$ approaches $0$.  For example, let's fix $c = 2$.  Then near-linear-space ALRW achieves query time $\approx n^{.44}$.  If we increase $\rho_u$ just slightly, the query time drops: setting $\rho_u = .01$ in Equation~\ref{eq:alrw_l2}, we obtain $\rho_q \approx .34$ (and thus query time $\approx n^{.34}$).  

This leaves room for function inversion to help:  consider instantiating ALRW with $\rho_q \approx .34$ and $\rho_u = .01$, with the plan to reduce the space to near-linear 
by applying Theorem~\ref{thm:basic_lsh} with $s = \rho_u$.  If this worked, we would incur total query time $n^{\rho_q + 4s} = n^{.38} \ll n^{.4375}$. 

In the rest of this section we give a more thorough exposition of these ideas.  In particular, there are several obstacles we must handle.  First, ALRW is not an LSH---it consists of a tree with a list of points at each leaf.    Therefore, Theorem~\ref{thm:basic_lsh} does not directly apply.
Moreover, the tree alone in ALRW requires $n^{1 + \rho_u + o(1)}$ space: thus, we do not even have enough space to store the internals of ALRW, even if we use function inversion to efficiently store the lists of points.
We must describe how to reduce the space usage of ALRW, as well as how function inversion can be applied to a tree rather than a hash to recover the lists of points at the leaves.  
Finally, we must set parameters: we want to find the value of $\rho_u$ to optimize performance.

In the rest of this section we combine function inversion with the ALRW data structure to obtain improved state-of-the-art query times for near-linear-space Euclidean ANN.

Throughout this section we will use $d(\cdot, \cdot)$ to refer to the Euclidean distance (we will extend our results to Manhattan distance using a standard reduction~\cite{AndoniLaRa17,LinialLoRa95}).  Our data structure begins with an instance of ALRW and modifies it to achieve our bounds. Throughout this section we use $\rho_u$ to denote the parameter used to construct ALRW.  (We will set the optimal $\rho_u$ for a given $c$ in Section~\ref{sec:analysis}.)

\subsection{The Optimal List-of-Points Data Structure}%
\label{sec:the_optimal_list_of_points_data_structure}

In this subsection we summarize ALRW.
We focus on the aspects of ALRW that are relevant to applying function inversion to their data structure; the reader should reference~\cite{AndoniLaRa16} for a full description and proof of correctness. Note that we use the full version of the paper~\cite{AndoniLaRa16}, as opposed to the conference version~\cite{AndoniLaRa17}, when referencing details of the algorithm.

ALRW is built using 
parameters 
$\rho_u$ and $\rho_q$ which provide the space/query time tradeoff: ALRW requires $n^{1 + \rho_u + o(1)}$ space, and queries can be performed in $n^{\rho_q + o(1)}$ time, so long as $\rho_u$ and $\rho_q$ satisfy Equation~\ref{eq:alrw_l2}.  

\subsubsection{High-level Description of ALRW}

ALRW consists of a single tree.  Each leaf of the tree has a pointer to a list of points; each internal node contains metadata to help with queries.  During preprocessing, the tree is constructed, along with the list for each leaf of the tree.  
We say that a leaf $\ell$ \defn{contains} a point $x$ if $x$ is in the list pointed to by $\ell$.

During a query $q$, a subtree of ALRW is traversed, beginning with the root.  
We say all nodes in this subtree are \defn{traversed by $q$}.
For each internal node traversed, in $n^{o(1)}$ time it is possible to find the children of the node that are recursively traversed for $q$.  When reaching a leaf, for each point in the list pointed to by the leaf, $d(x,q)$ is calculated. If $d(x,q) \leq cr$, then $x$ is returned.

A similar process defines what leaf contains each $x\in S$: a subtree of ALRW is traversed using the metadata stored at each internal node (in $n^{o(1)}$ time per node); each time a leaf is traversed it contains $x$.\footnote{In reality, the ALRW tree is built and the lists are created simultaneously using a single recursive process. However, it is useful for our exposition to imagine each point as if it were stored using a separate process after the tree was already constructed.}  We say that any node in this subtree is \defn{traversed by $x$}.

\subsubsection{ALRW Details and Parameters}%
\label{sec:alrw_details_and_params}

The internal ALRW nodes are of two types: \defn{ball nodes} and \defn{sphere nodes};
these correspond to recursive calls during construction to \textsc{ProcessBall} (for dense clusters) and \textsc{ProcessSphere} (for recursing using LSH) in~\cite{AndoniLaRa16}.\footnote{These two types of nodes are the reason why ALRW is data-dependent; they are integral to the bounds achieved by ALRW.}
Any node in the tree has at most $b_b$ ball node children, with $b_b = (\log\log\log n)^{O(1)}$, and at most $b_s$ sphere node children, with $b_s = n^{o(1)}$.

Consider a sphere node $v$ traversed when preprocessing a point $x\in S$. The number of children of $v$ that are also sphere nodes, and the probability that $x$ traverses each child, vary as we traverse the tree---however, they are set so that the expected number of sphere node children that $x$ traverses recursively remains the same.  
Let $b_u$ be this quantity: the expected number of sphere node children traversed when determining what lists to use to store $x$.
	By the proof of~\cite[Claim 4.12]{AndoniLaRa16}, 
	for any sphere node,
	$b_u = n^{(\rho_u + o(1))/K}$.  

The number of sphere nodes on a root-to-leaf path in ALRW is at most $K$ 
with $K = \Theta(\sqrt{\log n})$.  
Furthermore, all root-to-leaf paths have at most $K_b = \tilde{O}(\log\log n)$ ball nodes~\cite[Lemma 4.2]{AndoniLaRa16}.%
\footnote{Technically we are using $K$ slightly differently than in~\cite{AndoniLaRa16}---they use $K$ to bound the number of edges along a root-to-leaf path where both parent and child are sphere nodes; since $K_b = o(K)$ this does not affect our analysis.}

	In fact, if $\rho_u$ is a positive constant that does not depend on $n$, the number of children traversed by any $x$ is very close to $b_u$ with high probability. (This result does not appear explicitly in~\cite{AndoniLaRa16}, but can be obtained by applying Chernoff bounds to their analysis.)
\begin{lemma}%
\label{lem:branching_concentration}
Consider an ALRW data structure with parameter $\rho_u > 0 $ with $\rho_u= \Omega(1)$.
For any sphere node $v$ traversed when considering a point $x$, the number of sphere node children of $v$ also traversed when considering $x$ is at most $b_u(1 + 1/K)$ with high probability.  
\end{lemma}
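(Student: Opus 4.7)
The plan is to apply Chernoff bounds (specifically Corollary~\ref{cor:chernoff_log}) to the branching at a single sphere node $v$. First I would unpack how ALRW's \textsc{ProcessSphere} step generates the sphere node children of $v$: each such child is associated with an independently sampled pseudorandom spherical cap, and a point $x$ traverses the child iff $x$ falls inside its cap. Because the caps at $v$ are sampled independently of each other and of $x$, the indicators $Y_1,\ldots,Y_{b_s}$ of the events ``$x$ traverses child $i$'' are mutually independent Bernoulli random variables. Setting $X = \sum_i Y_i$, the analysis leading to Claim~4.12 in~\cite{AndoniLaRa16} — which is precisely what gives $b_u = n^{(\rho_u+o(1))/K}$ as the expected number of sphere children traversed per sphere node — yields $\E[X] = b_u$.

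Next I would invoke the multiplicative Chernoff bound (Lemma~\ref{lem:chernoff}) with deviation $\delta = 1/K$, obtaining
\[
	\Pr\bigl[X \geq (1 + 1/K)\, b_u\bigr] \;\leq\; \exp\!\left(-\frac{b_u}{K^2(2 + 1/K)}\right) \;\leq\; \exp\!\left(-\frac{b_u}{3K^2}\right).
\]
It remains to verify that the exponent is $\omega(\log n)$. Plugging in $K = \Theta(\sqrt{\log n})$ and $b_u = 2^{\Theta(\rho_u\sqrt{\log n})}$ (using $\rho_u = \Omega(1)$), the ratio $b_u/K^2$ equals $2^{\Theta(\rho_u\sqrt{\log n})}/\Theta(\log n)$, which grows superpolynomially in $\log n$. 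Hence the deviation probability is at most $n^{-\omega(1)}$, and a union bound over all (node, point) pairs — the tree has at most $n^{1+\rho_u+o(1)}$ nodes and there are $n$ points $x\in S$ — preserves the with-high-probability guarantee uniformly.

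The main obstacle I anticipate is not the Chernoff calculation itself but rather carefully justifying the two ingredients it rests on: (i) independence of the child indicators at $v$, i.e., confirming that ALRW indeed samples the direction defining each sphere child with fresh randomness conditional on the history up to $v$; and (ii) that the bound $b_u = n^{(\rho_u+o(1))/K}$ in~\cite{AndoniLaRa16} is a \emph{per-node} expectation of the branching factor rather than an aggregate quantity along a root-to-leaf path. Once these are pinned down by inspection of the \textsc{ProcessSphere} subroutine, the rest is a direct application of Corollary~\ref{cor:chernoff_log}.
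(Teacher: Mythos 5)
Your proposal matches the paper's proof essentially step for step: both argue that the child indicators are independent Bernoulli trials (because each sphere child's acceptance region is determined by an independently sampled random vector/cap), apply the multiplicative Chernoff bound with $\delta = 1/K$, and verify that $b_u/K^2 = n^{\rho_u/\sqrt{\log n} - o(1)}$ dominates $\log n$ when $\rho_u = \Omega(1)$. The closing union bound over node--point pairs is a reasonable extra precaution but is not needed for the lemma as stated, which concerns a single fixed $v$ and $x$.
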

\begin{proof}
	Whether $x$ is stored in each child depends on how close it is to a randomly-chosen vector stored for each child (see~\cite[Section 4.2]{AndoniLaRa16}).  These vectors are chosen independently; therefore, the number of children traversed is the sum of independent Bernoulli trials with expectation $b_u$.

	By Chernoff bounds (the first bound in Lemma~\ref{lem:chernoff}), the probability that more than $b_u(1 + 1/K)$ children are traversed is at most $e^{-b_u/(K^2(2 + 1/K))}$.  
	Since $b_u = n^{\rho_u/K}$ and $K = O(\sqrt{\log n}$), we have $b_u/(K^2(2+K)) = n^{\rho_u/\sqrt{\log n} - o(1)} = \Omega(\log n)$ (the last step is a very loose lower bound since $\rho_u = \Omega(1)$).
\end{proof}

	The candidate points for a given $q$ consist of the points contained by all leaves traversed by $q$.  Thus, the following lemma immediately gives correctness.
\begin{lemma}[\cite{AndoniLaRa16}, Lemma 4.9]%
\label{lem:alrw_correct}
For any query $q$, let $x\in S$ be a point with $d(q,x) \leq r$.  
Then with probability $> .9$, some leaf $\ell$ of ALRW traversed by $q$ is traversed by $x$.
\end{lemma}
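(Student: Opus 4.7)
The plan is to track, as we descend through the ALRW tree, the joint probability that both $q$ and $x$ continue to traverse a common child; then amplify the resulting per-path success probability to obtain a common leaf with probability at least $0.9$. Since ALRW alternates between sphere nodes (which perform LSH-like Gaussian-direction partitions) and ball nodes (which extract dense clusters), I would handle the two node types separately and then compose the estimates along the path.

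At a single sphere node, $b_s = n^{o(1)}$ random Gaussian vectors $g_1, \ldots, g_{b_s}$ are drawn, and a point $y$ traverses the child indexed by $g_j$ when $\langle y, g_j\rangle$ exceeds a threshold; different thresholds are used during queries and during preprocessing, calibrated by $\rho_q$ and $\rho_u$ respectively. The per-direction joint event ``both $q$ and $x$ traverse child $j$'' is then a two-dimensional Gaussian tail computation whose probability depends only on $d(q,x)$; for $d(q,x) \leq r$, standard Gaussian-partition estimates (as in~\cite{AndoniLaRa16}) yield an expected number of common children at this node that scales like $b_s / n^{\beta/K}$, where $\beta$ is the appropriate combination of $\rho_q$ and $\rho_u$. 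I would then chain this over a root-to-leaf path, which has at most $K = \Theta(\sqrt{\log n})$ sphere edges and $K_b = \tilde{O}(\log\log n)$ ball edges. Ball nodes are constructed so that $r$-close points cannot be separated (otherwise the cluster would not have been tight enough to extract), so they contribute only a $1 - o(1)$ factor. The accumulated per-path collision probability, composed with the branching from Lemma~\ref{lem:branching_concentration}, matches the tradeoff in Equation~\ref{eq:alrw_l2} and forces the expected number of common leaves to be $\Omega(1)$.

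The main obstacle is converting this expected-count lower bound into a with-probability-$0.9$ guarantee, since Markov's inequality only gives the wrong tail. I would handle this via a second-moment (Paley--Zygmund) argument: bound the variance of the number of common leaves by exploiting that the Gaussian vectors at sibling subtrees are drawn independently, so two common leaves are nearly independent unless they share a long prefix of ancestors---and the contribution of pairs sharing a long prefix is controlled by the branching bound of Lemma~\ref{lem:branching_concentration}. A simpler alternative, absorbed into the $n^{o(1)}$ slack of ALRW's bounds, is to run $O(1)$ independent copies of the entire tree in parallel and declare success if any copy contains a common leaf traversed by both $q$ and $x$; since each copy succeeds with probability $\Omega(1)$, $O(1)$ copies boost the probability above $0.9$ without affecting the stated space and query-time exponents.
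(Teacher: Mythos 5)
The paper does not prove this lemma at all: it is quoted verbatim as Lemma~4.9 of~\cite{AndoniLaRa16} and used as a black box, so there is no in-paper proof to compare against. Your sketch is therefore attempting to re-derive a result the paper simply imports, and the right comparison is against the original ALRW argument.

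Your high-level outline captures the correct shape of the problem (Gaussian-partition collision probabilities at sphere nodes, composition over a depth-$K$ path, amplification), but the core technique you propose---bound the \emph{expected} number of common leaves, then upgrade to a probability via Paley--Zygmund with a second-moment bound, or alternatively run $O(1)$ independent copies---is a genuinely different route from the one in~\cite{AndoniLaRa16}, and both halves have gaps as stated. The ALRW proof of Lemma~4.9 argues \emph{directly} on the success probability, via a recursion on $\Pr[\text{some common descendant leaf}]$ over the recursive \textsc{ProcessSphere}/\textsc{ProcessBall} structure, exploiting the independence of the Gaussian directions at each node to avoid any second-moment computation; this bypasses exactly the correlation issue you flag (common leaves sharing long prefixes are highly dependent, and bounding $\E[Z^2]$ by $O(\E[Z]^2)$ under that tree-induced correlation is far from automatic and is not carried out in your sketch). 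Your ``run $O(1)$ independent trees'' fallback is fine as an amplification device, but it does not sidestep the hard step: you still need $\Omega(1)$ success per copy, which is the content of the lemma. Separately, the claim that ball nodes ``cannot separate $r$-close points'' is asserted without justification and is not quite right as stated---ALRW's query and storage procedures use different ball-membership thresholds, and the fact that near points are not separated at ball nodes is a nontrivial triangle-inequality argument in~\cite{AndoniLaRa16}, not an automatic $1-o(1)$ factor. So the proposal is a plausible alternative strategy in outline, but it neither matches the cited proof nor closes its own key steps (the variance bound and the ball-node invariance).
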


\subsection{Making the Tree Space-Efficient}%
\label{sec:making_the_tree_space_efficient}

ALRW requires $n^{1 + \rho_u(c) + o(1)}$ space, even if we ignore the lists of points at each leaf (i.e., there are $\Omega(n^{1 + \rho_u(c)})$ nodes in the tree).  
Furthermore, this high space usage appears to be integral to the data structure: 
since the tree is recursively data-dependent, each tree node stores extra information about the data being stored.  In this section we describe how to reduce the space of ALRW while still retaining enough data to answer queries efficiently.

\newlength{\outertreewidth}
\setlength{\outertreewidth}{2cm}
\newlength{\outertreeheight}
\setlength{\outertreeheight}{4cm}

\begin{figure}[t]
\begin{center}
	\subfloat[We begin with ALRW (on the left), and remove all nodes with with more than $\bar{K}$ ancestor sphere nodes to obtain the middle step tree (on the right).]{%
\begin{forest}
	dottededge/.style={edge=densely dashed},
	thickedge/.style={edge=very thick},
  for tree={%
    inner sep=0pt,
    minimum size=3pt,
    circle,
    fill,
	edge = thick
  },
	[,name=root
			[[]
				[]
				[ 
					[
						[
						[[][,phantom]]
						[[][]]
						]
						[]
						[]
					] []
				]
			]
			[
				[]
				[ 
					[
						[]
						[
						[[][]]
						[]
						]
					] []
				]
			]
	]
	\draw[dashed] (2, -4) -- (-2,-4);
	\node at (2, -4) (scissor) {\ScissorHollowLeft};
	\draw[dotted,<->] (2,0) -- (scissor) node[midway,fill=white]{$\bar{K}$};
\end{forest}
\hspace{.6in}
\begin{forest}
	dottededge/.style={edge=densely dashed},
	thickedge/.style={edge=very thick},
	bigleaf/.style={inner sep=2pt,fill},
	smallleaf/.style={inner sep=1pt,thick,draw},
  for tree={%
    inner sep=0pt,
    minimum size=3pt,
    circle,
	fill,
	edge = thick
  },
	[,name=root
			[[]
				[]
				[ 
					[[][][]]
					[]
				]
			]
			[
				[]
				[ 
				[ [] [[,phantom[,draw=none,edge=none,fill=none]]  ]
					] []
				]
			]
	]
\end{forest}%
\label{fig:middle_tree}%
}
\hspace{.8in}
	\subfloat[%
Recursively removing small leaves (outlined) of the middle step tree (on the left) gives the truncated tree (on the right).]{%
\begin{forest}
	dottededge/.style={edge=densely dashed},
	thickedge/.style={edge=very thick},
	bigleaf/.style={inner sep=2pt,fill},
	smallleaf/.style={inner sep=2pt,draw,fill=none},
  for tree={%
    inner sep=0pt,
    minimum size=3pt,
    circle,
	fill,
	edge = thick
  },
	[,name=root
			[[,bigleaf]
				[,smallleaf]
				[ 
					[,name=trimmed
					 [,smallleaf] [,smallleaf] [,smallleaf]
					] 
					[,bigleaf]
				]
			]
			[
				[,bigleaf]
				[ 
				[ [,bigleaf] [,bigleaf[,phantom[,draw=none,edge=none,fill=none]]]
					] [,bigleaf]
				]
			]
	]
	\node[below left=-5pt and -3pt of trimmed,rotate=-15] {\ScissorHollowRight};
\end{forest}
\hspace{.7in}
\begin{forest}
	dottededge/.style={edge=densely dashed},
	thickedge/.style={edge=very thick},
	bigleaf/.style={inner sep=2pt,fill},
	smallleaf/.style={inner sep=1pt,thick,draw},
  for tree={%
    inner sep=0pt,
    minimum size=3pt,
    circle,
	fill,
	edge = thick
  },
	[,name=root
			[[]
				[]
				[ 
					[,name=trimmed
					] 
					[]
				]
			]
			[
				[]
				[ 
				[ [] [[,phantom[,draw=none,edge=none,fill=none]]]
					] []
				]
			]
	]
\end{forest}%
\label{fig:truncated_tree}%
\hspace{.1in}
}
\end{center}%
\caption{Creating the truncated tree in two steps.}
\vspace{-.1in}
\end{figure}
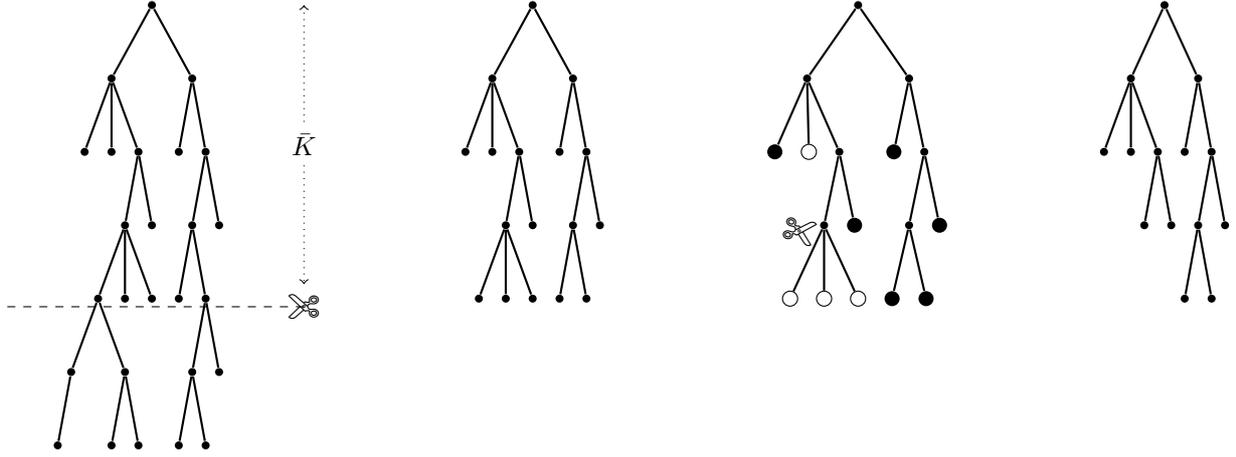

We create a \defn{truncated tree} which is a subtree of ALRW\@.
The main idea is as follows.  ALRW consists of $n^{1 + \rho_u + o(1)}$ leaves, each of which contains $O(1)$ points on average.  The goal of the truncated tree is to have $n^{1 + o(1)}$ leaves, each with $O(n^{\rho_u})$ points on average.  
The truncated tree is created carefully to satisfy some slightly stronger properties: 
we also want to bound the number of sphere nodes along the path from the root to any leaf in the tree (for e.g.\ the proof of Lemma~\ref{lem:number_functions}); furthermore, we want to be sure that any leaf traversed by a query contains $O(n^{\rho_u})$ expected points (for e.g\ the proof of Lemma~\ref{lem:leaf_sizes}).

We create the truncated tree in two steps.
First, let $\overline{K} = K/(1 + \rho_u)$ and consider the largest subtree of ALRW such that there are at most $\overline{K}$ sphere nodes on any root-to-leaf path of the truncated tree.  Call this subtree the \defn{middle step tree}; see Figure~\ref{fig:middle_tree}.  

We can define lists for each leaf of the middle step tree by running the recursive process from ALRW for each $x\in S$.  (Since this process is recursive, the points in the list of in a middle step tree leaf are exactly the points in its descendant leaves in ALRW.)

Now, we trim the middle step tree to obtain the truncated tree.
We proceed bottom-up through the middle step tree, starting with the second-to-last level.  
Call a leaf of the middle step tree \defn{large} if it contains at least $n^{\rho_u/(1 + \rho_u)}$ points and \defn{small} otherwise.
If all children of a node $v$ are small leaves, we remove all children of $v$ (so $v$ is now a leaf), repeating until no nodes have only small leaves as children.
We call this structure the \defn{truncated tree}.

The following lemma shows that the truncated tree achieves near-linear space.
\begin{lemma}%
\label{lem:truncated_tree_space}
The truncated tree contains $n^{1 + o(1)}$ nodes in expectation, each of which requires $n^{o(1)}$ space.
\end{lemma}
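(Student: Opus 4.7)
The lemma has two parts: each node occupies $n^{o(1)}$ space, and the total node count is $n^{1+o(1)}$ in expectation. The per-node bound follows immediately from the construction of ALRW --- every sphere node stores only $n^{o(1)}$ partition vectors and every ball node $n^{o(1)}$ metadata --- and the truncated tree merely takes a subtree of ALRW and relabels some internal nodes as leaves (adding only an $O(\log n)$ list pointer). So the substance of the proof is the node count.

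My plan is to fix notation $\tau = n^{\rho_u/(1+\rho_u)}$ for the pruning threshold and $D = \bar{K} + K_b = O(\sqrt{\log n})$ for the maximum root-to-leaf depth, then partition the nodes of the truncated tree into three classes --- large leaves (containing $\geq \tau$ points), internal nodes, and surviving small leaves (those with at least one sibling that is not a small leaf) --- and bound each class by $n^{1+o(1)}$ via linearity of expectation.

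First I would account the expected number of (point, middle-step-tree-leaf) incidences. At each sphere node a point $x \in S$ traverses $b_u = n^{\rho_u/K + o(1)}$ sphere children in expectation (cf.\ Claim 4.12 of~\cite{AndoniLaRa16}), so across the $\bar{K} = K/(1+\rho_u)$ sphere levels of the middle step tree $x$ traverses $b_u^{\bar{K}} = n^{\rho_u/(1+\rho_u) + o(1)}$ leaves in expectation on account of sphere recursion; the $K_b = \tilde{O}(\log\log n)$ ball levels contribute a further multiplicative $n^{o(1)}$ factor. Summing over all $n$ points, the expected total number of incidences is $n^{1 + \rho_u/(1+\rho_u) + o(1)}$. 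Since pruning only unions lists (never duplicating points), every truncated-tree incidence corresponds to a middle-step-tree incidence, and large leaves hold $\geq \tau$ points by definition, giving
\[
    \E[\#\text{large leaves}] \;\leq\; n^{1+\rho_u/(1+\rho_u)+o(1)}/\tau \;=\; n^{1+o(1)}.
\]
Next, interpreting the pruning to cascade to its fixed point (the natural reading of ``repeating until no nodes have only small leaves as children''), after pruning converges every internal node has at least one child that is either a large leaf or itself internal. A short induction on depth then gives that every internal node has a large-leaf descendant; charging each internal node to such a descendant and using that each large leaf has at most $D$ ancestors yields $\E[\#\text{internal nodes}] \leq D \cdot \E[\#\text{large leaves}] = n^{1+o(1)}$. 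Finally, every surviving small leaf has a non-small-leaf sibling, and each parent has at most $b_s + b_b = n^{o(1)}$ children, so $\#\text{surviving small leaves} \leq (b_s + b_b)\cdot(\#\text{large leaves} + \#\text{internal nodes}) = n^{1+o(1)}$. Summing the three classes establishes the claim.

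The main obstacle is pinning down the cascading reading of the pruning precisely enough to justify the key structural claim that every internal node has a large-leaf descendant; once that is in place, the depth bound $D = O(\sqrt{\log n})$ cleanly amortizes the internal nodes and surviving small leaves against the large-leaf count in a unified way. A secondary subtlety is verifying that the expected-incidence bound from ALRW carries over to the lists in the truncated tree, which it does because pruning unions rather than duplicates.
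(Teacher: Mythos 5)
Your proof is correct and follows essentially the same route as the paper's: bound total incidences via \cite[Claim 4.12]{AndoniLaRa16}, divide by the threshold to bound large leaves, charge internal nodes to large-leaf descendants (amortized by depth $\bar{K}+K_b$), and charge surviving small leaves to their non-small siblings (amortized by fanout $b_s+b_b$). Your version is slightly more explicit about the cascading-pruning invariant and the induction showing every internal node has a large-leaf descendant, which the paper asserts without elaboration, but the decomposition and all three bounds are the same.
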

\begin{proof}
	Each ALRW tree node requires $n^{o(1)}$ space; this applies to the truncated tree nodes as well.

	We begin by bounding the size of all lists of the middle step tree (while we are ultimately ignoring the size of the lists in this result, it will help us analyze the number of nodes).  By~\cite[Claim 4.12]{AndoniLaRa16}, each $x\in S$ appears in $n^{\rho_u \overline{K}/K}$ lists in the middle step tree in expectation.  Thus, the expected size of all lists in the middle step tree is $n^{1 + \rho_u\overline{K}/K} = n^{1 + \rho_u/(1 + \rho_u)}$.

	The total size of all leaves in the truncated tree is at most the total size of all leaves in the middle step tree (because if $x$ traverses $v$ it must traverse all ancestors of $v$); thus the total size of all leaves in the truncated tree is at most $n^{1 + \rho_u/(1 + \rho_u)}$ in expectation.  Since each leaf of the truncated tree contains at least $n^{\rho_u/(1 + \rho_u)}$ elements, the number of large leaves is at most $O(n)$ in expectation.  Every internal node is the ancestor of some large leaf, so there are at most $n^{1 + o(1)}(\overline{K} + K_b))$ internal nodes. Any small leaf is the sibling of a large leaf or an internal node, so there are at most $O(n^{1 + o(1)}(b_s + b_b)(1 + \overline{K} + K_b))$ small leaves.  
	Bounding each term and summing, there are $n^{1 + o(1)}$ expected total nodes.  
\end{proof}

Our data structure stores a distinct label (e.g.\ from $1$ to $n^{1 + o(1)}$) for each node of the truncated tree.  We occasionally refer to a leaf using its label (if $\ell$ is a label we may say ``leaf $\ell$'').
We retain the definition that a leaf $\ell$ \defn{contains} $x \in S$ in the truncated tree if $x$ traverses $\ell$, even though we do not store the actual lists.

Note that Lemma~\ref{lem:alrw_correct} immediately applies to the truncated tree: if $d(x,q) \leq cr$, then with probability $\geq .9$, $x$ is contained in a leaf of the truncated tree traversed by $q$.

\subsection{Applying Function Inversion}%
\label{sec:applying_function_inversion}

\subsubsection{Truncated tree functions.}%
\label{sec:truncated_tree_functions}

The job of the truncated tree functions is to recover the points contained by each leaf in the truncated tree. Specifically if $x_j$ is contained in some leaf with label $\ell$, we want there to be a truncated tree function $\tau$ such that $\tau(j) = \ell$; thus, $\tau^{-1}(\ell) = j$.

The challenge is that each $x\in S$ maps to many points in the truncated tree, whereas each function must output a single leaf.  Fortunately, we have shown that the way $x$ traverses the tree is highly regular: if a node $v$ is traversed by $x$, with high probability, the number of children of $v$ traversed by $x$ is at most $b_u(1 + 1/K)$. This allows us to efficiently iterate over the leaves traversed by $x$.

A \defn{tree route} $I$ is
a vector of length $\bar{K} + K_b$ where each entry in $I$ is a number between $1$ and $b_u(1 + 1/K) + b_b$.  
Let $R = (b_u(1 + 1/K) + b_b)^{\bar{K} + K_b}$ be the number of tree routes.

\begin{lemma}%
\label{lem:number_functions}
The number of possible tree routes is $R = n^{\rho_u/(1 + \rho_u) + o(1)}$.
\end{lemma}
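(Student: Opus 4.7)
The plan is to show that $R$ is of the claimed form by a direct calculation of $\log_n R$, absorbing all sub-polynomial terms into the $o(1)$ in the exponent.

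First I would argue that the base $b_u(1+1/K) + b_b$ is essentially $b_u$. Recall that $b_b = (\log\log\log n)^{O(1)}$ while, by the definition in Section~\ref{sec:alrw_details_and_params}, $b_u = n^{(\rho_u+o(1))/K}$ with $K = \Theta(\sqrt{\log n})$, so $b_u = 2^{\Theta(\rho_u \sqrt{\log n})}$. Since $\rho_u = \Omega(1)$, this dwarfs $b_b$; combined with the $(1+1/K)$ factor this yields
\[
	b_u(1+1/K) + b_b \;=\; b_u \cdot (1 + o(1)) \;=\; n^{(\rho_u + o(1))/K}.
\]

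Next I would take $\log_n$ of both sides of the defining equation for $R$:
\[
	\log_n R \;=\; (\bar{K} + K_b)\cdot \log_n\bigl(b_u(1+1/K)+b_b\bigr) \;=\; (\bar{K} + K_b)\cdot \frac{\rho_u + o(1)}{K}.
\]
Now substitute $\bar{K} = K/(1+\rho_u)$ and use $K_b = \tilde{O}(\log\log n) = o(K)$, which is valid because $K = \Theta(\sqrt{\log n})$. This gives
\[
	\log_n R \;=\; \left(\frac{K}{1+\rho_u} + o(K)\right) \cdot \frac{\rho_u + o(1)}{K} \;=\; \frac{\rho_u}{1+\rho_u} + o(1),
\]
and exponentiating yields the claimed bound $R = n^{\rho_u/(1+\rho_u) + o(1)}$.

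There is no real obstacle here; the statement is essentially a bookkeeping exercise. The only point worth watching is that one must verify $b_u$ actually dominates $b_b$, which relies on the assumption (stated just before Lemma~\ref{lem:branching_concentration}) that $\rho_u = \Omega(1)$ so that $b_u$ grows super-polylogarithmically. All other ``error'' quantities ($1/K$, $K_b/K$, and the $o(1)/K$ slack in the exponent of $b_u$) contribute only to the additive $o(1)$ in the final exponent of $n$.
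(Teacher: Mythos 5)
Your proof is correct and is essentially the same calculation the paper does, just organized via $\log_n R$ rather than factoring the base into a product and bounding each factor's power separately. The one small stylistic difference: the paper avoids the need to argue $b_b = o(b_u)$ (which requires $\rho_u = \Omega(1)$) by instead using the crude bound $b_u(1+1/K) + b_b \leq b_u(1+1/K)\cdot b_b$ valid whenever both factors exceed $2$, and then showing $b_b^{\bar K + K_b} = n^{o(1)}$ directly; your version leans on the $\rho_u = \Omega(1)$ hypothesis (which the paper does state and later justify via $\rho_u$ being a constant depending only on $c$), so both routes land in the same place.
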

\iffull
\begin{proof}
The number of possible $I$ is $R = (b_u(1 + 1/K) + b_b)^{\bar{K} + K_b}$.  
Since $b_u, b_b > 2$ 
we can loosely upper bound $R \leq (b_u(1 + 1/K))^{\bar{K} + K_b} \cdot b_b^{\bar{K} + K_b}$.

Recall that $b_b = (\log\log\log n)^{O(1)}$, $\bar{K} = K/(1 + \rho_u) = O(\sqrt{\log n}/(1 + \rho_u))$, and $K_b = \tilde{O}(\log\log n)$.  We immediately have that $b_b^{\bar{K} + K_b} = n^{o(1)}$, and $(1 + 1/K)^{\bar{K} + K_b} = n^{o(1)}$.

We have (see Section~\ref{sec:alrw_details_and_params}) $b_u = n^{\rho_u + o(1)}/K$.  Therefore, $b_u^{\bar{K}} = n^{\rho_u/(1 + \rho_u) + o(1)}$.  Similarly, we have $b_u^{K_b} = n^{\rho_u/\sqrt{\log N} + o(1)} = n^{o(1)}$.
\end{proof}
\else
	\begin{proof}
		See Appendix~\ref{sec:omitted_proofs}.
	\end{proof}
\fi

For all possible tree routes $I$, we define a function $\tau_I(j)$ using the following process.  
We begin with vertex $v$ equal to the root of the truncated tree.  In step $i$ (for $i$ from $1$ to $\bar{K}$), we set $v$ equal to the $I[i]$th child of $v$ traversed by $x_j$.  (There are at most $b_u(1 + 1/K) + b_b$ such children by Lemma~\ref{lem:branching_concentration} and definition of $b_b$.)  If at any point there are less than $I[i]$ children of $v$ traversed by $x_j$, then $\tau_I(j) = -1$.  If $v$ is a leaf with label $\ell$, then $\tau_I(j) = \ell$.

We immediately obtain the following.  First, if $x$ is contained in $\ell$, then there is a tree route $I$ with $\tau_I(j) = \ell$.  
Second, for any $j\in [N]$ and any tree route $I$, we can calculate $\tau_I(j)$ in $n^{o(1)}$ time.  (This follows from the definition of ALRW: each node has $n^{o(1)}$ total children which we consider one by one, and for each child we can determine if $x$ traverses that child in $n^{o(1)}$ time; multiplying by $K$ retains $n^{o(1)}$ time.)

\subsubsection{The Data Structure}
First, preprocessing.  We begin by creating the truncated tree.  
Then, for all $R$ possible tree routes $I$, we build the all-function-inversion data structure (Theorem~\ref{thm:fiatnaor_all}) on $\tau_I$ with $\sigma = R$.  
This data structure requires $n^{1 + o(1)}$ space (noting that the domain of all $\tau_I$ is $[n]$).

To query, 
we use the function inversion data structure to recover the contents of all leaves traversed by $q$ in the truncated tree as follows.  We calculate the labels of leaves of the truncated tree traversed by $q$; call this set of labels $L$.  For each $\ell\in L$ and each tree route $I$, we query $\tau_I^{-1}(\ell)$.  We compare $q$ to each point returned; if any has distance at most $cr$ from $q$, it is returned.  If no such point is found, we return that there is no close point.

\subsubsection{Analysis}
\label{sec:analysis}

Correctness follows immediately from ALRW: if $d(q,x) \leq r$, then with probability $\geq .9$, $x$ is in a leaf traversed by $q$ in the truncated tree.  The all-function-inversion data structure will return this leaf with high probability. 

Now we give performance.  The following is the key performance lemma, bounding the number of leaves traversed by the query in expectation, as well as the total number of points at distance $>cr$ they contain.

\begin{lemma}%
\label{lem:leaf_sizes}
The number of leaves of the truncated tree traversed by $q$ is at most $n^{\rho_q/(1 + \rho_u) + o(1)}$ in expectation.
The expected number of points $x$ satisfying: (1) $x$ is contained in a leaf traversed by $q$, and (2) $d(x,q) > cr$, is at most $n^{\rho_u/(1 + \rho_u) + o(1)}$. 
\end{lemma}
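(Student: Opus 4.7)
The plan is to prove the two parts by first scaling the ALRW branching analysis from $K$ sphere levels down to the middle step tree's $\overline{K} = K/(1+\rho_u)$ sphere levels, and then transferring the resulting bounds to the truncated tree.

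For Part 1, the key input is the query-side analog of \cite[Claim 4.12]{AndoniLaRa16}: at each sphere node, the expected number of sphere-node children that $q$ traverses is $b_q = n^{(\rho_q + o(1))/K}$, paralleling the storage branching factor $b_u = n^{(\rho_u + o(1))/K}$ used in Section~\ref{sec:alrw_details_and_params}. Ball-node children contribute only a multiplicative $b_b^{K_b} = n^{o(1)}$ factor. Multiplying over the $\overline{K}$ sphere levels of the middle step tree gives $b_q^{\overline{K}} \cdot n^{o(1)} = n^{\rho_q \overline{K}/K + o(1)} = n^{\rho_q/(1+\rho_u)+o(1)}$ middle-step-tree leaves traversed in expectation. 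Since every truncated-tree leaf is a distinct node of the middle step tree, and since the number of middle-step-tree nodes $q$ visits at any fixed depth is dominated by $b_q^{\overline{K}}$ (so the total over depths is within a constant factor by geometric summation), the same bound carries over to the truncated tree.

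For Part 2, I begin by observing that each leaf of the truncated tree contains at most $n^{\rho_u/(1+\rho_u)+o(1)}$ points: surviving large middle-step-tree leaves inherit the expected middle-step-tree leaf size $n^{\rho_u/(1+\rho_u)}$ (with concentration by Chernoff around the bound derived in the proof of Lemma~\ref{lem:truncated_tree_space}), while pruned leaves absorb at most $b_s = n^{o(1)}$ small middle-step-tree leaves, each of size below $n^{\rho_u/(1+\rho_u)}$. I then bound the expected number of distinct far points contained in some leaf traversed by $q$ by combining this per-leaf bound with ALRW's data-dependent clustering: a far point that appears in one traversed leaf tends to appear in many, so that the distinct far candidate count is governed essentially by the size of a single leaf, rather than by the product of leaves traversed and per-leaf size. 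Since truncation merges small leaves into their parents without introducing any new (point, leaf) containments, this bound transfers from the middle step tree directly.

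The main obstacle lies in Part 2: a naive pair-counting argument, multiplying the $n^{\rho_q/(1+\rho_u)+o(1)}$ traversed leaves by the per-leaf size $n^{\rho_u/(1+\rho_u)+o(1)}$, only yields a total (far point, traversed leaf) pair count of $n^{(\rho_u+\rho_q)/(1+\rho_u)+o(1)}$, which is strictly weaker than the claimed distinct-far bound. Closing this gap requires carefully exploiting ALRW's data-dependent partitioning structure so that the sets of far candidates across the $n^{\rho_q/(1+\rho_u)}$ traversed leaves substantially overlap, keeping the distinct count on the order of a single leaf's contents.
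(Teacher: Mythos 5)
Your Part 1 is on the right track and matches the paper in spirit: the paper simply invokes the recursive query analysis of~\cite[Lemma 4.13]{AndoniLaRa16} with the root-to-leaf sphere depth rescaled to $\overline{K}$, which is exactly what your per-level branching argument reconstructs. However, Part 2 has two problems. First, the opening claim that \emph{every} truncated-tree leaf holds at most $n^{\rho_u/(1+\rho_u)+o(1)}$ points is false: the truncated-tree leaves that coincide with (large) middle-step-tree leaves carry no per-leaf upper bound of this form — a single such leaf could in principle hold $\Theta(n)$ points. The paper handles this by case-splitting: for surviving middle-step-tree leaves it bounds their aggregate size directly via the ALRW analysis (again~\cite[Lemma 4.13]{AndoniLaRa16} rescaled to $\overline{K}$, giving $n^{(\rho_q+\rho_u)/(1+\rho_u)+o(1)}$ total far points), and only for the newly-created leaves (internal middle-step-tree nodes) does it use a per-leaf bound, obtained because their children were all small, i.e.\ of size $< n^{\rho_u/(1+\rho_u)}$, and there are $n^{o(1)}$ of them.

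Second, and more importantly, the ``gap'' you identify and then try to close with a speculative clustering argument is not actually there to be closed: the exponent $n^{\rho_u/(1+\rho_u)+o(1)}$ in the statement of Lemma~\ref{lem:leaf_sizes} appears to be a typo. The paper's own proof concludes with $n^{(\rho_q+\rho_u)/(1+\rho_u)+o(1)}$, and that is precisely the quantity used downstream in the proof of Lemma~\ref{lem:performance} (where it is invoked as $\E[\sum_{\ell,I}|\tau_I^{-1}(\ell)|] = n^{(\rho_u+\rho_q)/(1+\rho_u)+o(1)}$, a (point, leaf) pair count rather than a distinct-point count). Your ``naive pair-counting'' bound is thus the intended result. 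The proposed fix — that ALRW's data-dependent partitioning causes the far-candidate sets of traversed leaves to overlap so heavily that the distinct count collapses to a single leaf's worth — is not established anywhere and is not the kind of guarantee ALRW provides (its guarantees are in expectation over traversed lists, precisely the pair count); pursuing it would be effort spent proving a claim the paper neither needs nor asserts in its own proof.
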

\begin{proof}
By the proof of~\cite[Lemma 4.13]{AndoniLaRa16},\footnote{Specifically, by setting the base case using $l = \overline{K}$ rather than $K$ in Equation (16) of~\cite{AndoniLaRa16} and solving.} $q$ traverses $n^{\rho_q \overline{K}/K + o(1)} = n^{\rho_q/(1 + \rho_u) + o(1)}$ expected leaves in the middle step tree.  

By the proof of~\cite[Lemma 4.13]{AndoniLaRa16} (substituting $\overline{K}$ for $K$ in the exponent of the last equation), 
	we have that the expected number of total points with distance $\geq cr$ stored in all lists of leaves of the middle step tree traversed by $q$ is bounded by
	\iffull
	\[
	n^{1 + (\rho_q - 1)/(1 + \rho_u) + o(1)} = n^{(\rho_u + \rho_q)/(1 + \rho_u) + o(1)}.
	\]
\else
	$n^{1 + (\rho_q - 1)/(1 + \rho_u) + o(1)} = n^{(\rho_u + \rho_q)/(1 + \rho_u) + o(1)}$.
\fi
	Some leaves of the truncated tree are leaves of the middle step tree; their total size is bounded by the above.  

If a leaf in the truncated tree is an internal node of the middle step tree, its children all have size at most $n^{\rho_u/(1 + \rho_u)}$.  Since any node has $n^{o(1)}$ children, any leaf of the truncated tree that is not a leaf of the middle step tree contains $n^{\rho_u/(1 + \rho_u) + o(1)}$ points.  Multiplying by the number of leaves in the truncated tree traversed by $q$ gives a total of $n^{(\rho_q + \rho_u)/(1 + \rho_u) + o(1)}$ points.
\end{proof}

\begin{lemma}%
\label{lem:performance}
The above data structure requires $n^{1 + o(1)}$ space, preprocessing time $nR = n^{1 + \rho_u/(1 + \rho_u) + o(1)}$, and expected query time 
	$n^{(\rho_q + 4\rho_u)/(1 + \rho_u)}$.
\end{lemma}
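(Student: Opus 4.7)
The plan is to bound the three quantities in turn using the components established earlier, keeping careful track of how $R$ (the number of tree routes) propagates through the space bound, preprocessing cost, and query cost of the all-function-inversion data structure from Theorem~\ref{thm:fiatnaor_all}.

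For the space bound, the truncated tree itself contributes $n^{1 + o(1)}$ by Lemma~\ref{lem:truncated_tree_space}. Each of the $R$ truncated tree functions $\tau_I$ has domain $[n]$, so applying Theorem~\ref{thm:fiatnaor_all} with $\sigma = R$ gives an all-function-inversion data structure of size $\tilde{O}(n + nR/\sigma) = \tilde{O}(n)$. Adding the original $S$ and the labels of tree nodes still leaves $n^{1+o(1)}$ total space. For preprocessing time, Theorem~\ref{thm:fiatnaor_all} builds the inversion structure in $\tilde{O}(nR)$ time, and by Lemma~\ref{lem:number_functions} we have $R = n^{\rho_u/(1+\rho_u) + o(1)}$, giving the claimed $n^{1 + \rho_u/(1+\rho_u) + o(1)}$ bound (the tree itself can be built in $n^{1+o(1)}$ time, which is absorbed).

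The interesting step is the query time. For each leaf label $\ell \in L$ traversed by $q$ and each tree route $I$, we invoke the all-function-inversion data structure for $\tau_I$ at query $\ell$, which by Theorem~\ref{thm:fiatnaor_all} costs $\tilde{O}\bigl(T(\tau_I) (1 + |\tau_I^{-1}(\ell)|) \sigma^3\bigr)$ in expectation, with $T(\tau_I) = n^{o(1)}$ and $\sigma = R$. Summing, the total expected cost is
\[
  n^{o(1)} R^3 \Bigl( |L| \cdot R + \sum_{\ell \in L, \, I} |\tau_I^{-1}(\ell)| \Bigr).
\]
For the first term, Lemma~\ref{lem:leaf_sizes} gives $\mathbb{E}[|L|] = n^{\rho_q/(1+\rho_u) + o(1)}$, so $|L| \cdot R = n^{(\rho_q + \rho_u)/(1+\rho_u) + o(1)}$ in expectation. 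For the second term, I will argue that each point $x_j$ contained in a leaf $\ell$ traversed by $q$ corresponds to exactly one tree route $I$ with $\tau_I(j) = \ell$ (since the route encodes the sequence of child choices made when $x_j$ traversed the tree), so the double sum equals the total (with multiplicity) number of candidate points in leaves traversed by $q$, which by Lemma~\ref{lem:leaf_sizes} is $n^{\rho_u/(1+\rho_u) + o(1)}$ in expectation. Combining with $R^3 = n^{3\rho_u/(1+\rho_u) + o(1)}$, the first term dominates and yields the claimed $n^{(\rho_q + 4\rho_u)/(1+\rho_u) + o(1)}$ expected query time.

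The main obstacle I anticipate is the interaction between the high-probability guarantee of Lemma~\ref{lem:branching_concentration} (which ensures every $\tau_I$ is well-defined, so that every leaf containing $x_j$ actually appears as $\tau_I(j)$ for some $I$) and the expectation-only bound from Lemma~\ref{lem:leaf_sizes}; I must make sure that the rare ``bad'' case where some $x_j$ exceeds the branching bound $b_u(1 + 1/K)$ contributes only $o(1)$ to the expected query time, by falling back to the trivial $O(n)$-time scan noted in the preliminaries. Once this detail is handled, bounding the two pieces of the sum against $R^3$ and invoking Lemma~\ref{lem:leaf_sizes} in expectation are routine.
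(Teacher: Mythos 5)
Your proof is essentially the paper's: same decomposition of the query cost into the $R^4|L|$ term and the $R^3\sum_{\ell,I}|\tau_I^{-1}(\ell)|$ term, with both controlled via Lemma~\ref{lem:leaf_sizes}, and the same treatment of space and preprocessing. The explicit observation that each pair $(x_j,\ell)$ with $x_j$ contained in $\ell$ corresponds to exactly one tree route $I$ is a nice addition that the paper leaves implicit, and your remark about retreating to the trivial $O(n)$ scan when the high-probability branching bound fails is exactly the paper's convention from the preliminaries. One thing worth flagging: you take the second conclusion of Lemma~\ref{lem:leaf_sizes} at face value as $n^{\rho_u/(1+\rho_u)+o(1)}$, whereas the proof of that lemma actually derives $n^{(\rho_q+\rho_u)/(1+\rho_u)+o(1)}$ (and the paper's own proof of this performance lemma uses the latter). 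With the corrected exponent the two terms are of the \emph{same} order $n^{(\rho_q+4\rho_u)/(1+\rho_u)+o(1)}$ rather than the first strictly dominating, so your final bound is unaffected, but the ``first term dominates'' claim is an artifact of the lemma statement's apparent typo and would be worth rephrasing.
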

\begin{proof}
	The truncated tree requires a total of $n^{1 + o(1)}$ space.  The time to build the middle step tree and truncated tree is $n^{1 + \rho_u/(1 + \rho_u) + o(1)}$.  (We cannot build all of ALRW and truncate it in this time---instead, we must build the middle step tree recursively as in~\cite{AndoniLaRa16} with maximum depth $\bar{K}$.)  Finally, preprocessing the all-function-inversion data structure requires $\tilde{O}(nR)$ time by Theorem~\ref{thm:fiatnaor_all}.

	Recalling that a node can be traversed in $n^{o(1)}$ time, we can find the set of leaves $L$ traversed by $q$ in $n^{\rho_q/(1 + \rho_u) + o(1)}$ time by Lemma~\ref{lem:leaf_sizes}.

Now, we bound the time to complete the function inversion queries.
By Theorem~\ref{thm:fiatnaor_all}, for any $I$ and any leaf $\ell$ of the conceptual tree, we can find $\tau_i^{-1}(\ell)$ in $\tilde{O}(R^3(|\tau_I^{-1}(\ell)| + 1))$ expected time.

Summing over all $\ell$ and $I$ and collecting terms, the query time is
\iffull
\[
	\tilde{O}\left(R^4|L| + R^3\sum_{\ell\in L, I} |\tau_I^{-1}(\ell)|\right).
\]
\else
	$\tilde{O}\left(R^4|L| + R^3\sum_{\ell\in L, I} |\tau_I^{-1}(\ell)|\right)$.
\fi
Note that when we find a point at distance $\leq cr$ we return it; thus we need only include points at distance $>cr$ in each $\tau_I^{-1}(\ell)$ term.  

By Lemma~\ref{lem:number_functions}, $R^3|L| = n^{(\rho_q + 3\rho_u)/(1 + \rho_u) + o(1)}$.  By Lemma~\ref{lem:leaf_sizes}, 
\[
	\E\left[\sum_{\ell\in L, I} |\tau_I^{-1}(\ell)|\right] = n^{(\rho_u + \rho_q)/(1 + \rho_u) + o(1)}
\]

Summing we obtain the lemma.
\end{proof}

Now, we set $\rho_u$ to optimize the above results.  Note that the following bounds are somewhat loose.  For query time, we ignore the $1 + \rho_u$ term in the denominator of the exponent.  For preprocessing time, we give the worst case for any $c$; if $c$ is small or large the bound is fairly pessimistic.  That said, both of these losses have, ultimately, a modest effect on the running time since we will choose small values for $\rho_u$.

We use $\alpha(c)$ to bound our running time (rather than the more traditional $\rho$) to prevent confusion with $\rho_u$ and $\rho_q$ used by ALRW, and $\rho$ used in Theorem~\ref{thm:basic_lsh}.

\runningtime*

\begin{proof}
	By Lemma~\ref{lem:performance}, the query time of our data structure is $n^{(\rho_q + 4\rho_u)/(1 + \rho_u) + o(1)}$.  We can upper bound this with $n^{\alpha(c) + o(1)}$, with $\alpha(c) = \min_{\rho_q, \rho_u} \rho_q + 4\rho_u$.  

Substituting $\rho_q$ using Equation~\ref{eq:alrw_l2}, we obtain
\[
	\alpha(c) = \min_{\rho_u} \frac{\left(\sqrt{2c^2 - 1} - (c^2-1)\sqrt{\rho_u}\right)^2}{c^4}  + 4\rho_u.
\]
Rearranging terms,
\[
	\alpha(c) = \min_{\rho_u} \rho_u \cdot(4 + (c^2 - 1)^2/c^4) - \sqrt{\rho_u}\cdot 2(c^2 - 1)\sqrt{2c^2 - 1}/c^4 + (2c^2 - 1)/c^4.
\]
Any quadratic $Ax^2 + Bx + C$ with $A,B > 0$ has minimum value $C - B^2/(4A)$ at $x = -B/(2A)$.  Substituting and simplifying, we obtain the claimed bound on $\alpha(c)$.
\begin{align*}
\alpha(c) &= \frac{2c^2 -1}{c^4} - \frac{1}{4}\left(\frac{4(c^2-1)^2(2c^2 -1)}{c^8}\right)\left(\frac{1}{4 + (c^2 - 1)^2/c^4}\right)\\
			  &= \frac{2c^2 - 1}{c^4}\left(1 - \frac{(c^2 - 1)^2}{4c^4 + (c^2 - 1)^2}\right).
\end{align*}

The preprocessing time is at most $n^{1 + \rho_u/(1 + \rho_u) + o(1)}$.  In the above we set $\rho_u$ to satisfy 
	\[
		\sqrt{\rho_u} = \sqrt{2c^2 - 1}\cdot \frac{(c^2 -1)}{4c^4 + (c^2 - 1)^2}.
	\]
	It is possible to verify numerically that 
	for any $c$, the $\rho_u$ satisfying the above has $\rho_u \leq .013$ (the closed form of the maximum appears to be too complicated to give useful intuition).
\end{proof}

\paragraph{Manhattan ANN}

As mentioned in Section~\ref{sec:introduction}, we can obtain bounds for Manhattan ANN by a classic embedding; see~\cite{LinialLoRa95}.  This immediately gives a Manhattan ANN data structure with query time $n^{\alpha_M(c) + o(1)}$ with 
\[
	\alpha_M(c) = \frac{2c - 1}{c^2}\left(1 - \frac{(c - 1)^2}{4c^2 + (c - 1)^2}\right).
\]
The preprocessing time remains bounded above by $n^{1.013 + o(1)}$.  The proof is essentially identical to that of Theorem~\ref{thm:running_time}.

\bibliographystyle{plain}
\bibliography{spaceTimeLSH}

\end{document}